 \pgfplotsset{compat=1.9}
\newtheorem{theorem}{Theorem}
\newtheorem{lemma}[theorem]{Lemma}
\newtheorem{remark}{Remark}
\newtheorem{prop}{Proposition}
\title{A Low-rank Projected Proximal Gradient Method for Spectral Compressed Sensing}
\author{
    Xi Yao and {Wei Dai} ~\IEEEmembership{Member,~IEEE} 
    \thanks{Xi Yao (e-mail:
     x.yao19@imperial.ac.uk) and Wei Dai (e-mail:
     wei.dai1@imperial.ac.uk) are with the Department of Electrical and Electronic Engineering, Imperial College London, London SW7 2BU, United Kindom. }
}
\begin{document}\sloppy
%
\maketitle

\begin{abstract}
This paper presents a new approach to the recovery of a spectrally sparse signal (SSS) from partially observed entries, focusing on challenges posed by large-scale data and heavy noise environments. The SSS reconstruction can be formulated as a non-convex low-rank Hankel recovery problem. Traditional formulations for SSS recovery often suffer from reconstruction inaccuracies due to unequally weighted norms and over-relaxation of the Hankel structure in noisy conditions. Moreover, a critical limitation of standard proximal gradient (PG) methods for solving the optimization problem is their slow convergence. We overcome this by introducing a more accurate formulation and a  Low-rank Projected Proximal Gradient (LPPG) method, designed to efficiently converge to stationary points through a two-step process. The first step involves a modified PG approach, allowing for a constant step size independent of signal size, which significantly accelerates the gradient descent phase. The second step employs a subspace projection strategy, optimizing within a low-rank matrix space to further decrease the objective function. Both steps of the LPPG method are meticulously tailored to exploit the intrinsic low-rank and Hankel structures of the problem, thereby enhancing computational efficiency. Our numerical simulations reveal a substantial improvement in both the efficiency and recovery accuracy of the LPPG method compared to existing benchmark algorithms. This performance gain is particularly pronounced in scenarios with significant noise, demonstrating the method's robustness and applicability to large-scale SSS recovery tasks.
\end{abstract}
\begin{IEEEkeywords}
Hankel matrix, Low-rank matrix completion, Non-convex optimization, Spectral compressed sensing
\end{IEEEkeywords}
\section{Introduction}

A large class of practical applications features high-dimensional signals that can be modeled or approximated by a superposition of sparse spikes in the spectral domain and involve estimation of the signal from its time-domain samples. This spectrally sparse signal (SSS) arises in various applications, including medical imaging\cite{lustig2007sparse}, fluorescence microsopy\cite{schermelleh2010guide}, radar imaging\cite{potter2010sparsity}, channel estimation in wireless communications\cite{chi2013compressive}, and time series prediction\cite{gillard2022hankel},  among others. In many practical situations, only part of the signals can be measured because of hardware and physical constraints.  Hence, it is essential to explore spectral compressed sensing, which aims to recover SSSs from incomplete time-domain samples. It is worth noting that spectral compressed sensing is closely related to the problem of harmonic retrieval, which seeks to extract the underlying frequencies of a signal from a collection of its time-domain samples.  Conventionally, the frequencies of SSS can be extracted using methods such as MUSIC \cite{schmidt1986multiple}, Prony's method \cite{de1795essai} or a matrix pencil approach \cite{hua1992estimating}, exploiting the shift-invariance of the harmonic structure.

 If there is no damping factor in the signal, meaning that the signal does not weaken as time increases, spectral compressed sensing can be transformed into a conventional compressed sensing problem over a discrete domain \cite{donoho2006compressed,candes2006robust}. Various recovery algorithms have been developed using a finite discrete dictionary, including iterative thresholding \cite{blumensath2009iterative}, subspace pursuit \cite{dai2009subspace}, and CoSaMP \cite{needell2009cosamp}. However, the frequency leakage effect is unavoidable when the ground-truth frequencies are off the grid \cite{chi2011sensitivity,herman2010general}. The gridless approach avoids the issues of frequency discretization. The total-variation norm \cite{candes2014towards} and atomic norm \cite{tang2013compressed} minimization algorithms super-resolve the frequencies in a continuous domain. In these methods, the low-rank matrix arising in SSS recovery has a Toeplitz sub-matrix, and the exact recovery guarantee is based on the separation conditions.
 
\subsection{Related Gridless Work }

\textit{Convex approach:} Unfortunately, the aforementioned gridless methods cannot be applied to large-scale compressed sensing because of the high computational complexity of solving the equivalent semi-definite programming (SDP). Using the Vandermonde decomposition, Enhanced Matrix Completion (EMaC) \cite{chen2013spectral}, which originates from the traditional spectral estimation technique named Matrix Enhancement Matrix Pencil, proposed a low-rank Hankel matrix completion problem based on the shift-invariance property and spectral sparsity. However, it is well known that the structured low-rank approximation problem is an NP-hard global optimization problem\cite{gillis2011low}. For the Hankel low-rank approximation problem, the number of stationary points increases polynomially with size $n$ and exponentially with rank $r$\cite{ottaviani2014exact}. Inspired by low-rank completion \cite{candes2012exact}, EMaC solves the low-rank Hankel matrix completion problem via nuclear norm minimization as a convex problem. Later, Hankel matrix-based recovery approaches became more popular \cite{fazel2013hankel,xu2018sep} for large-scale problems due to their computational efficiency, as the involved matrix as a whole was Hankel structured and supports efficient decomposition. In an alternative approach, the Burer-Monteiro heuristic \cite{burer2005local,zhang2021spectrally} is applied where low-rank matrices are presented by a bilinear outer product to avoid explicit matrix singular value decomposition (SVD). However, it is noteworthy that the worst-case analysis in \cite{xu2018sep} shows that convex optimization can fail even when a single element is missing.

\textit{Non-convex approach: } The nuclear norm-based approach usually requires the full SVD of the enhanced matrix, which is unaffordable for large-scale problems. Therefore, many non-convex methods \cite{cadzow1988signal, ying2017hankel, cai2018spectral, cai2019fast, wang2021fast, gillard2022hankel} have been developed to solve low-rank Hankel matrix completion problems directly.  Cadzow's algorithm \cite{cadzow1988signal, gillard2022hankel} performs alternating projections between the set of Hankel matrices and the set of low-rank matrices. Although \cite{schost2016quadratically} proposed an extra Newton-type step to achieve quadratic convergence based on the common regularity intersection conditions \cite{lewis2008alternating, andersson2014new} of Cadzow's algorithm, its high computational cost prevents its practical application. Another acceleration method is inspired by Riemannian optimization \cite{wei2016guarantees} by adding a projection to the direct sum of the column and row spaces \cite{cai2019fast, wang2021fast}. Alternatively, a low-rank matrix/tensor can be written as an outer-product of low-rank matrices, and optimization can be performed directly on the component matrices to avoid explicit matrix decomposition \cite{cai2018spectral, ying2017hankel}. Among these, Fast Iterative Hard Thresholding (FIHT) \cite{cai2019fast} and Projected Gradient Descent (PGD) \cite{cai2018spectral} are the two most efficient algorithms with a convergence guarantee based on the Hankel incoherence property \cite{chen2013spectral}.

The enhanced low-rank matrix approximation in the context of spectrally sparse signal recovery faces two primary challenges: 1) the presence of a biased weighted norm due to repeated elements in the induced Hankel matrix, and 2) substantial computational costs associated with solving the structured low-rank approximation. The key issues stem from the condition number of the Hessian matrix of the quadratic terms in the objective function, which is proportional to the Hankel enforcement parameter $\beta$ and the signal size. This results in excessively slow convergence rates, especially when $\beta$ and the signal size are large. Additionally, the conventional approach of evaluating the rank function through SVD becomes impractical for large-scale problems due to its $O(n^3)$ complexity, leading to an unfeasible number of iterations for gradient-based methods even in modestly sized problems.

\subsection{Our Contributions}
To address these challenges, this work introduces a novel projected proximal gradient (LPPG) method, consisting of two meticulously crafted steps, to solve a purposefully designed objective function with equal weighted norms. This method demonstrates efficient convergence and enhanced performance in challenging scenarios, such as heavy noise and low sampling.
Our contributions are summarized as follows:
\begin{enumerate}
\item \textbf{A More Accurate Formulation:} The improvement of reconstruction recovery is supported by our formulation, which more accurately aligns with the original problem model in two key aspects. Unlike conventional Hankel-based algorithms, we represent the SSS in vector form rather than an augmented matrix form, thereby mitigating bias resulting from the different weights in the matrix $l_2$ form, as induced by the Hankel operator \cite{wang2021fast,gillard2022hankel}. This approach significantly improves reconstruction accuracy, proportionate to the signal size. Additionally, we introduce a Hankel enforcement parameter in the unconstrained optimization framework, allowing for adjustment based on noise levels and thereby enhancing reconstruction accuracy in heavily noisy environments.

\item \textbf{A Convergent Low-Rank Projected Proximal Gradient Algorithm:} While the proximal gradient (PG)-based approach ensures a monotonically decreasing function with a fixed step-size, the standard PG \cite{parikh2014proximal} is hindered by slow convergence, as its step-size is inversely proportional to the signal size. To accelerate convergence, we incorporate a modified PG step with an optimized, signal size-independent step-size. Furthermore, a novel low-rank matrix subspace projection step is employed to further reduce the number of iterations by decreasing the objective function, showcasing impressive recovery capabilities for low-sampling ratios and moderate orders of SSS. We quantify the reduction achieved by both steps, thereby ensuring guaranteed convergence.

\item \textbf{Efficient Implementations:} 
Beyond reducing the number of iterations in this two-step iterative model, we have also reduced the computational complexity of each iteration through two strategies. Firstly, the Hankel and low-rank structures are fully leveraged to decrease computational demands and storage requirements for large-sized variables in the iterative process. Secondly, optimization techniques are employed to transform the two-variable optimization problem into a single-variable paradigm using the subgradient condition. Consequently, we have reduced the computational complexity from $O(n^3)$ to $O(r^4n + r^3n\log n)$ per iteration.
\end{enumerate}

In fact, the Hankel enforcement parameter inversely propotional to the step-size in the gradient descent process. This parameter sets a strict upper limit for the step-size, thereby directly linking noise level and convergence speed. Unlike other descent methods that require complex estimation or line-search for step-size based on initial conditions, the LPPG method can efficiently determine it once the noise level is assessed. For instance, in noiseless situations, it's advisable to set the parameter for structural constraints near zero, as suggested in \cite{davenport2016overview}. This implies that a very large step-size can be selected for the quickest convergence when the Hankel enforcement parameter is small. Moreover, the LPPG method opens up promising prospects for the application of fast PG variations \cite{li2015accelerated,lee2014proximal} in spectral compressed sensing.

To validate our proposed LPPG method, we have conducted extensive numerical simulations. These simulations demonstrate the superiority of LPPG over existing benchmarks in four key areas: 1) it requires fewer samples to estimate the SSS with higher model order; 2) it ensures rapid convergence in noiseless scenarios; 3) it maintains robustness in the presence of heavy noise; 4) it improves accuracy by reducing the biased weighted norms for SSS.

This paper is an extension of the work \cite{yao2023projected} with efficient implementations, theoretical proofs, and broad simulations. 

\subsection{Notation}
Vectors and matrices are denoted by bold lowercase and uppercase letters, respectively, as \(\boldsymbol{x}\) and \(\boldsymbol{X}\). The transpose and the Hermitian of \(\boldsymbol{X}\) are represented by \(\boldsymbol{X}^\mathsf{T}\) and \(\boldsymbol{X}^\mathsf{H}\), respectively. The symbols \(\|\boldsymbol{x}\|_2\) and \(\|\boldsymbol{X}\|_F\) respectively denote the \(l_2\) norm and the Frobenius norm. The diagonal matrix formed from vector \(\boldsymbol{x}\) is denoted as \(\text{diag}(\boldsymbol{x})\). Notably, a Hankel matrix is a matrix with identical elements on ascending skew diagonals.

Operators are denoted by calligraphic letters. Specifically, \(\mathcal{I}\) denotes the identity operator and \(\mathcal{H}\) denotes the linear operator, mapping a vector \(\boldsymbol{x} \in \mathbb{C}^n\) to a Hankel matrix in \(\mathbb{C}^{p \times q}\) with \(p + q = n+1\) (where \(p \geq q\)). The entry \(\mathcal{H}\boldsymbol{x}[i,j]\) is given by \(\boldsymbol{x}_{i+j}\):  
\begin{equation}\label{eq:Hankel-mapping}
    \mathcal{H}\boldsymbol{x} =
    \begin{bmatrix}
        x_0 & x_1 & \dots & x_{q-1} \\ 
        x_1 & x_2 & \dots & x_{q} \\ 
        \vdots & & \vdots \\
        x_{p-1} & x_{p} & \dots & x_{n-1}
    \end{bmatrix}
    \in \mathbb{C}^{p \times q}.   
\end{equation}

Denote \(\mathcal{H}^*\) as the adjoint operator of \(\mathcal{H}\), which is a linear operator from \(p \times q\) matrices to \(n\)-dimensional vectors. For any matrix \(\boldsymbol{X} \in \mathbb{C}^{p \times q}\), it follows that \(\mathcal{H}^*\boldsymbol{X} = \big\{ \sum_{i+j=a} \boldsymbol{X}[i,j] \big\}_{a=0}^{n-1}\). Then, \(\mathcal{W} \coloneqq \mathcal{H}^*\mathcal{H}\) is a diagonal matrix equivalent to \(\text{diag}(\boldsymbol{w})\), where \(\boldsymbol{w} = [1, \dots, n_2-1, \underbrace{q, \cdots, q}_{p-q+1}, q-1, \dots, 1]^{\mathsf{T}}\). Moreover, the left inverse operator of \(\mathcal{H}\), \(\mathcal{S}\), which retrieves \(\boldsymbol{x}\) from its Hankel representation \(\mathcal{H}\boldsymbol{x}\), is defined as \(\mathcal{S} = \mathcal{W}^{-1}\mathcal{H}^*\).

\subsection{Organization}
The remainder of this paper is organized as follows. In Section~\ref{section: formulation}, we introduce our optimization formulation for spectral compressed sensing, with a focus on one-dimensional frequency models. This section includes a careful derivation of our objective function through a more precise nonconvex relaxation approach. In Section~\ref{section: method}, we present our proposed LPPG method, designed to find the critical points of the formulation defined. Detailed explanations of the LPPG algorithm are provided, highlighting its two-step iterative process: first, a modified proximal gradient step with a large, signal size-independent step size to ensure convergence to critical points; second, an optimization within the low-rank subspace to minimize the objective function. Additionally, we discuss the convergence properties and subgradient-based stopping criteria of the method. Section~\ref{section: CCA} go through the computational complexity and efficient implementation of the LPPG method. We also explore the extension of our method to multi-dimensional frequency models in Section~\ref{section: High-dimen}. In Section~\ref{sect: numerical}, we showcase the numerical validations of our method under various settings, demonstrating its efficiency and accuracy in comparison with existing benchmark methods. For readers interested in the technical intricacies, proofs of our results are provided in Section~\ref{sect: proofs}. Finally, Section~\ref{sect: conclusion} concludes the paper, summarizing our key contributions and discussing potential avenues for future research.

\section{The Optimization Formulation} \label{section: formulation}

We consider an order-\(r\) SSS\footnote{For ease of presentation, we focus on one-dimensional SSS in this paper. The proposed method can be extended to multilevel cases as discussed in Section~\ref{section: High-dimen}.} \(\boldsymbol{x} \in \mathbb{C}^n\) (\(r \ll n\)) which is a superposition of complex sinusoids:
\begin{equation}
    \label{eq:SSS}
    \boldsymbol{x} = \sum_{k=1}^{r} b_k \boldsymbol{y}(f_k,\tau_k;~n),
\end{equation}
where \(b_k \in \mathbb{C}\) is the complex coefficient, \(\boldsymbol{y}(f_k, \tau_k;~n) \coloneqq [1, e^{i2\pi f_k-\tau_k},\dots,e^{(i2\pi f_{k}-\tau_k)(n-1)}]^{\mathsf{T}} \in \mathbb{C}^n\), and \(f_k \in [0, 1), \tau_k\) denote the normalized frequency and damping factor of the \(k\)th spectral component\cite{cai2018spectral} separately. Due to physical constraints, typically only a part of \(\boldsymbol{x}\) is observed. Let \(\Omega \subset \{0, \dots, n-1 \}\) denote the set of indices of observed entries with \(|\Omega| \coloneqq m < n\), and let \(\mathcal{P}_{\Omega}\) be the corresponding sampling operator which acquires only the entries indexed by \(\Omega\) and sets all other entries to zero. The sampling rate is thus \(Sp \coloneqq m/n < 1\).

The challenge is to recover an SSS \(\boldsymbol{x}\) from its partial observations \(\boldsymbol{s}\), i.e.,
\begin{equation}
    \text{find} \quad \boldsymbol{x} \quad \text{s.t.} \quad \mathcal{P}_{\Omega}\boldsymbol{x} = \boldsymbol{s}.
\end{equation}
The Hankel matrix \(\mathcal{H}\boldsymbol{x}\) admits a Vandermonde decomposition:
\begin{equation}
    \mathcal{H}\boldsymbol{x} = \sum_{k=1}^r b_k \boldsymbol{y}(f_k, \tau_k;~p) \boldsymbol{y}(f_k,\tau_k;~q)^{\mathsf{T}} \in \mathbb{C}^{p \times q}.   
\end{equation}
Typically, we choose \(p \approx q \approx n/2\) to make \(\mathcal{H}\boldsymbol{x}\) close to a square matrix\cite{chen2013spectral,cai2018spectral}. It is clear that \(\text{rank} (\mathcal{H}\boldsymbol{x}) \le r \ll q\)\cite{hua1992estimating}.

Thus, SSS recovery can be transformed into a low-rank Hankel matrix recovery problem\cite{chen2013spectral,zhang2021spectrally,cai2018spectral,cai2019fast,ying2017hankel}. We introduce an enhanced Hankel matrix \(\boldsymbol{H} = \mathcal{H}\boldsymbol{x}\) to capture the data discrepancy in vector form while maintaining the Hankel structure. The rank-constrained nonconvex problem is formulated as: 
\begin{equation}
    \min_{\boldsymbol{H},\boldsymbol{x}}~ 
      \delta(\text{rank}(\boldsymbol{H}) \le r)
     + \frac{1}{2} \| \boldsymbol{s} - \mathcal{P}_{\Omega}\boldsymbol{x} \|^2_2 
    \quad \text{s.t.}~ 
    \boldsymbol{H} = \mathcal{H}\boldsymbol{x}, \label{eq:SSS-constrained}
\end{equation}
where \(\delta(\cdot)\) is the indicator function of rank-constrained matrices, taking values zero if true and infinity otherwise.

To address the complexities of solving the constrained nonconvex optimization problem, a relaxation approach is employed:
\begin{equation} 
    \label{eq:relaxed-objective}
    \min_{\boldsymbol{H},\boldsymbol{x}}~
    \delta(\text{rank}(\boldsymbol{H}) \le r) + \frac{1}{2} \| \boldsymbol{s} - \mathcal{P}_{\Omega}\boldsymbol{x} \|^2
    + \frac{\beta}{2} \| \boldsymbol{H} - \mathcal{H}\boldsymbol{x} \|^2_F + \frac{\alpha}{2}\|\boldsymbol{x}\|^2,
\end{equation}
where \(\beta > 0\) is the Hankel enforcement parameter, necessary to maintain the Hankel structure; and \(\alpha\) is a small positive scalar for regularization. The problem \eqref{eq:relaxed-objective} can be solved by the PG algorithm\cite{lee2012proximal} with a guarantee of convergence. However, direct application of PG leads to slow convergence caused by the unequal weighted norm \cite{zvonarev2015iterative,gillard2015stochastic} and high per-iteration computational complexity of \(O(n^3)\), primarily due to the SVD required by the rank constraint.

\begin{remark}
    For multidimensional SSS, where \(\boldsymbol{x} = \sum_k b_k \boldsymbol{y}(f_{1,k},\tau_{1,k};~n_1) \circ \cdots \circ \boldsymbol{y}(f_{d,k},\tau_{d,k};~n_d) \in \mathbb{C}^{n_1 \times \cdots \times n_d}\) with \(\circ\) denoting the outer product, a multilevel Hankel matrix is needed. Our method can be easily extended to such cases. For clarity, we focus on one-dimensional SSS in Sections~\ref{section: formulation} and \ref{section: method} for technical development but discuss extensions to multi-dimensional SSS in Section~\ref{section: High-dimen} and present simulation results in Section~\ref{sect: numerical}.
\end{remark}

We summarize the advantages of our formulation below:
\begin{enumerate}
    \item \textbf{The Hankel enforcement parameter \(\beta\):} We explicitly introduce \(\beta\) to adjust for different noise levels. This flexibility offers at least two advantages:
    1) Fast convergence speed for low-level noise; a small \(\beta\) can be chosen\cite{davenport2016overview}, and the step-size of gradient-based methods is inversely proportional to \(\beta\) as shown in \cite{cai2018spectral,cai2019fast};
    2) Accurate reconstruction for high-level noise; with unreliable data fidelity, a larger \(\beta\) better matches the original constrained optimization.
    
    \item \textbf{Data fidelity \(\| \boldsymbol{s} - \mathcal{P}_{\Omega}\boldsymbol{x} \|^2\):} The chosen data fidelity term differs from \(\|(\mathcal{H})(\boldsymbol{s} - \mathcal{P}_{\Omega}(\boldsymbol{x})) \|_F^2\) \cite{chen2013spectral,cai2018spectral,cai2019fast,wang2021fast,gillard2022hankel}, where different weighting coefficients for individual entries might introduce bias in the solution. Our approach ensures that noise terms are treated equally, avoiding unequal contributions and potential biases.
\end{enumerate}
The benefits of our formulation are numerically demonstrated in Section~\ref{sect: numerical}.

\section{A Low-rank Projected Proximal Gradient Method} \label{section: method}

Our proposed formulation is a nonconvex, unconstrained optimization problem incorporating nonsmooth terms. Our primary goal is to identify the critical points \((\boldsymbol{x}^{\star}, \boldsymbol{H}^{\star})\) of the objective function \eqref{eq:relaxed-objective}, based on the observed samples.

\subsection{The Modified Proximal Gradient (MPG) Step} \label{subsect: modified PG}
The proximal gradient (PG) method \cite{parikh2014proximal} is known to converge in nonconvex and nonsmooth optimization scenarios, provided that the step size is judiciously chosen. This gradient-descent based approach consistently reduces the objective function and exhibits greater robustness compared to alternating projection methods with random initialization\cite{cai2018spectral}.

The standard procedure for the PG method is outlined in Algorithm \ref{PGal}. This method tackles unconstrained optimization problems of the form:
\begin{equation}
    \min_{\boldsymbol{x}}~ F(\boldsymbol{x}) \coloneqq \min_{\boldsymbol{x}}~ f(\boldsymbol{x}) + g(\boldsymbol{x}),
\end{equation}
where \(f(\cdot)\) is a Lipschitz differentiable function\footnote{The gradient \(\nabla f\) is Lipschitz continuous, with the Lipschitz constant denoted by \(L_{\nabla f}\).}, and \(g(\cdot)\) is a proximable function. The proximal operator is defined as:
\begin{align}\label{prox}
    \text{prox}_{\gamma g}(\boldsymbol{v}) = \arg~\min_{\boldsymbol{x}} \left( g(\boldsymbol{x}) + \frac{1}{2\gamma} \|\boldsymbol{x} - \boldsymbol{v}\|^2  \right),
\end{align}
which is easy to solve. It is important to note that the step size \(0 < \gamma \le 1/L_f\) is inversely proportional to the Lipschitz constant \(L_f\).
\begin{algorithm}
    \caption{The Standard PG Method \cite{parikh2014proximal}}
    \label{PGal}
    \begin{algorithmic}
        \State Set \(\boldsymbol{x}_0\) and \(0 < \gamma \leq  \frac{1}{L_{f}}\) (\(L_f\): the Lipschitz constant of \(\nabla f\)).
        \For {\(k=0,1,\dots\)}
               \State Update \(\boldsymbol{x}_{k+1} \leftarrow \text{prox}_{\gamma g} \left( \boldsymbol{x}_k - \gamma \nabla f(\boldsymbol{x}_k)\right)\).
        \EndFor
    \end{algorithmic}
\end{algorithm}

Applying the PG method directly to the formulated function \eqref{eq:relaxed-objective} leads to defining
\begin{equation}
   f(\boldsymbol{H}, \boldsymbol{x}) 
    = \frac{1}{2} \| \boldsymbol{s} - \mathcal{P}_{\Omega}\boldsymbol{x} \|^2 
    + \frac{\beta}{2} \| \boldsymbol{H} - \mathcal{H}\boldsymbol{x} \|^{2}_{F} + \frac{\alpha}{2}\|\boldsymbol{x}\|^2. 
\end{equation}
The resulting Lipschitz constant of \(\nabla f\) is \(1 + \beta \min(p,q) + \alpha\), where \(p\) and \(q\) are derived from the Hankel mapping \eqref{eq:Hankel-mapping}. Consequently, the step-size becomes inversely proportional to the signal size, leading to unacceptable slow convergence for signals of moderate size.

In our methodology, we utilize the necessary condition for critical points, which dictates that the partial derivative of \(\boldsymbol{x}\) should be zero. This indicates that minimizing \(\boldsymbol{x}\) is a primary step. Notably, the variables \(\boldsymbol{x}\) and \(\boldsymbol{H}\) are linearly coupled only in the term \(\| \boldsymbol{H} - \mathcal{H}\boldsymbol{x} \|^{2}_{F}\). Consequently, a substitution approach is employed to identify the critical points \(\boldsymbol{x}^{\star}\) in terms of \(\boldsymbol{H}\). This substitution is crucial for reformulating the objective function to involve only the single variable \(\boldsymbol{H}\):
\begin{align}
    F(\boldsymbol{H}) & \coloneqq f(\boldsymbol{H}) + g(\boldsymbol{H}), \label{eq:F-x} \\
    f(\boldsymbol{H}) & \coloneqq \min_{\boldsymbol{x}}~ \frac{1}{2} \| \boldsymbol{s} - \mathcal{P}_{\Omega}\boldsymbol{x}\|^2 + \frac{\beta}{2} \| \boldsymbol{H} - \mathcal{H}\boldsymbol{x} \|^{2}_{F}+\frac{\alpha}{2}\|\boldsymbol{x}\|^2, \label{eq:f-x} \\
    g(\boldsymbol{H}) & \coloneqq \delta( \text{rank}(\boldsymbol{H}) \le r ). \label{eq:g-x}
\end{align}

Solving \(\nabla_{\boldsymbol{x}}\) of the equation yields the optimal \(\boldsymbol{x}^{\star}\):
\begin{equation}
    \label{eq:xoptimal}
    \boldsymbol{x}^{\star} = \mathcal{L} ( \mathcal{P}_{\Omega}^*\boldsymbol{s} + \beta\mathcal{H}^*\boldsymbol{H} ), \quad \mathcal{L} = (\alpha \mathcal{I}+\mathcal{P}_{\Omega}^*\mathcal{P}_{\Omega} + \beta\mathcal{H}^*\mathcal{H})^{-1}.
\end{equation}
\begin{remark}
    The operator \(\mathcal{L} = (\alpha \mathcal{I}+\mathcal{P}_{\Omega}^*\mathcal{P}_{\Omega} + \beta\mathcal{H}^*\mathcal{H})^{-1}\) exists and can be represented by diagonal matrices with positive entries. This is because \(\mathcal{H}^*\mathcal{H}\) scales each entry of an \(n\)-dimensional vector by \(w_a\), the number of elements in the \(a\)th skew-diagonal of a \(p \times q\) matrix, and \(\mathcal{P}_{\Omega}^*\mathcal{P}_{\Omega} = \mathcal{P}_{\Omega}\), represented by diagonal matrices with 1s on the observed entries and 0s elsewhere.
\end{remark}

A key aspect of formulation \eqref{eq:F-x} is the upper bound of the Lipschitz constant of \(\nabla f(\boldsymbol{H})\) by \(\beta\), independent of dimensionality. This allows for a larger step-size in the corresponding proximal gradient method, enabling faster convergence in large-scale problems. Proposition \ref{prop:Lipschitz-Constant} calculates the Lipschitz constant of \(\nabla f(\boldsymbol{H})\):
\begin{prop}
    \label{prop:Lipschitz-Constant}
    Denote the Lipschitz constant of \(\nabla f(\boldsymbol{H})\) as \(L_{\nabla f}\). It follows that:
    \begin{equation} \label{eq:gradient}
        \nabla f(\boldsymbol{H}) = - \beta \mathcal{H}\mathcal{L}\mathcal{P}^*_{\Omega}\boldsymbol{s} - \beta^2 \mathcal{H} \mathcal{L} \mathcal{H}^* \boldsymbol{H} + \beta \boldsymbol{H},
    \end{equation}
    and \(L_{\nabla f} < \beta\).
\end{prop}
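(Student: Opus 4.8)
The plan is to regard $f(\boldsymbol{H})$ in \eqref{eq:f-x} as the value function of the inner minimization over $\boldsymbol{x}$. Because $\alpha>0$, that inner problem is a strongly convex, smooth quadratic, so its minimizer $\boldsymbol{x}^{\star}(\boldsymbol{H})$ is unique, depends affinely on $\boldsymbol{H}$, and is given exactly by \eqref{eq:xoptimal}. I would first obtain the gradient through the envelope (Danskin) theorem: writing $\phi(\boldsymbol{H},\boldsymbol{x})$ for the integrand, the stationarity condition $\nabla_{\boldsymbol{x}}\phi(\boldsymbol{H},\boldsymbol{x}^{\star})=0$ annihilates the chain-rule contribution of $\boldsymbol{x}^{\star}(\boldsymbol{H})$, leaving
\[
  \nabla f(\boldsymbol{H}) = \nabla_{\boldsymbol{H}}\phi\big(\boldsymbol{H},\boldsymbol{x}^{\star}(\boldsymbol{H})\big) = \beta\big(\boldsymbol{H}-\mathcal{H}\boldsymbol{x}^{\star}(\boldsymbol{H})\big).
\]
Substituting \eqref{eq:xoptimal} and expanding $\mathcal{H}\mathcal{L}(\mathcal{P}^*_{\Omega}\boldsymbol{s}+\beta\mathcal{H}^*\boldsymbol{H})$ then reproduces \eqref{eq:gradient} term by term.

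For the Lipschitz constant, the essential point is that $\nabla f$ is affine in $\boldsymbol{H}$, so its Lipschitz constant equals the spectral norm of the linear part $\mathcal{A}:=\beta\mathcal{I}-\beta^{2}\mathcal{H}\mathcal{L}\mathcal{H}^{*}$. Since $\mathcal{L}$ is a positive diagonal (hence self-adjoint) operator, $\mathcal{A}$ is self-adjoint and $L_{\nabla f}=\max_a|\lambda_a(\mathcal{A})|$, so it suffices to locate the spectrum of $\mathcal{G}:=\mathcal{H}\mathcal{L}\mathcal{H}^{*}$. Here I would use that the nonzero eigenvalues of $\mathcal{G}$ coincide with those of $\mathcal{L}\,\mathcal{H}^{*}\mathcal{H}=\mathcal{L}\mathcal{W}$ (the nonzero spectra of a product agree in either order). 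As $\mathcal{L}=(\alpha\mathcal{I}+\mathcal{P}_{\Omega}+\beta\mathcal{W})^{-1}$ and $\mathcal{W}=\mathrm{diag}(\boldsymbol{w})$ are simultaneously diagonal, $\mathcal{L}\mathcal{W}$ is diagonal with entries $w_a/(\alpha+p_a+\beta w_a)$, where $p_a\in\{0,1\}$ marks whether $a\in\Omega$. Passing through $\mathcal{A}=\beta\mathcal{I}-\beta^{2}\mathcal{G}$ gives the eigenvalues
\[
  \beta-\frac{\beta^{2}w_a}{\alpha+p_a+\beta w_a}=\frac{\beta(\alpha+p_a)}{\alpha+p_a+\beta w_a}\in(0,\beta),
\]
the strict upper bound following from $\alpha>0$, which makes each numerator strictly smaller than its denominator.

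The step I expect to demand the most care is the strictness of the bound. The computation above shows $0\prec\mathcal{A}$ with every eigenvalue arising on $\mathrm{range}(\mathcal{H})$ strictly below $\beta$; the only directions attaining the value $\beta$ are those in $\ker(\mathcal{H}^{*})=\mathrm{range}(\mathcal{H})^{\perp}$, on which $\mathcal{G}=0$ and $\mathcal{A}=\beta\mathcal{I}$. I would therefore make explicit that the effective domain is the space of Hankel matrices $\mathrm{range}(\mathcal{H})$ on which the enhanced variable $\boldsymbol{H}$ is modeled, verify that $\nabla f$ leaves this subspace invariant (each term of \eqref{eq:gradient} lands in $\mathrm{range}(\mathcal{H})$ whenever $\boldsymbol{H}$ does), and conclude $L_{\nabla f}<\beta$ there, with the uniform gap $(\alpha+p_a)/(\alpha+p_a+\beta w_a)<1$ quantifying the margin. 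The remaining work—checking positivity of $\mathcal{A}$ and assembling the two spectral contributions—is routine once the diagonalization is in hand.
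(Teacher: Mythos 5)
Your proposal is correct, and for the gradient it coincides with the paper's route: the paper isolates exactly your envelope/Danskin step as Lemma~\ref{lemma: innergradient} (the chain-rule contribution of \(\boldsymbol{x}^{\star}(\boldsymbol{H})\) vanishes by stationarity of the inner minimizer), then substitutes \eqref{eq:xoptimal} into \(\beta(\boldsymbol{H}-\mathcal{H}\boldsymbol{x}^{\star})\) to get \eqref{eq:gradient}. Where you genuinely diverge is the Lipschitz bound. The paper stops at the one-line operator-norm estimate \(\|\beta^{2}\mathcal{H}\mathcal{L}\mathcal{H}^{*}\| = \beta\,\|\beta\mathcal{H}\mathcal{L}\mathcal{H}^{*}\| < \beta\) and concludes \(L_{\nabla f}<\beta\); you instead diagonalize, using that the nonzero spectrum of \(\mathcal{H}\mathcal{L}\mathcal{H}^{*}\) equals that of the diagonal operator \(\mathcal{L}\mathcal{W}\), so the Hessian \(\beta\mathcal{I}-\beta^{2}\mathcal{H}\mathcal{L}\mathcal{H}^{*}\) has eigenvalues \(\beta(\alpha+p_a)/(\alpha+p_a+\beta w_a)\in(0,\beta)\) on \(\mathrm{range}(\mathcal{H})\) (with \(p_a\in\{0,1\}\) the sampling indicator and \(w_a\) the skew-diagonal counts). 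This buys two things the paper's proof does not provide: an explicit, quantified spectral gap, and the observation that the paper's deduction is incomplete as written --- since the Hessian acts as \(\beta\mathcal{I}\) on \(\ker(\mathcal{H}^{*})\), which is nontrivial whenever \(\min(p,q)\ge 2\), the sharp Lipschitz constant of \(\nabla f\) on all of \(\mathbb{C}^{p\times q}\) is exactly \(\beta\): the bound \(\|\beta^{2}\mathcal{H}\mathcal{L}\mathcal{H}^{*}\|<\beta\) establishes positive definiteness of the Hessian but does not push its top eigenvalue strictly below \(\beta\). Your repair --- restricting to \(\mathrm{range}(\mathcal{H})\) and checking that the affine map \(\nabla f\) leaves this subspace invariant --- is the natural one and proves the strict claim there; the one caveat worth recording is that it does not fully mesh with how the proposition is later consumed: the iterates \(\boldsymbol{H}_{k}\), \(\boldsymbol{H}_{k,\frac{1}{2}}\) produced by \(\mathcal{T}_r\) are low-rank but generally not Hankel, so the differences at which the Lipschitz bound is invoked in Theorem~\ref{theo: convergence} (e.g.\ in the coefficient \((\frac{1}{\gamma}-L_f)/2\) with \(\gamma=1/\beta\)) need not lie in \(\mathrm{range}(\mathcal{H})\); there one must either settle for \(L_{\nabla f}\le\beta\) or argue strictness along the particular directions generated by the iteration.
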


\begin{remark}
    The improved upper bound on the Lipschitz constant, \(L_f\), permits a larger step size, \(\gamma = 1/\beta\), in iterative PG algorithms. This optimized step-size is pivotal for ensuring convergence.
\end{remark}

Efficient computations are another hallmark of our PG step. It primarily involves evaluations of \(\nabla f(\boldsymbol{H})\) as in \eqref{eq:gradient}, and the proximal operator:
\begin{equation} 
    \text{prox}_{\gamma g} \left(\boldsymbol{H}_k - \gamma \nabla f(\boldsymbol{H}_k)\right) = \mathcal{T}_r \left( \boldsymbol{H}_k - \gamma \nabla f(\boldsymbol{H}_k) \right) \label{eq:PGHankel},  
\end{equation}
where \(\mathcal{T}_r(\cdot)\) is the truncated SVD operator\cite{antonello2018proximal}. Significantly, \(\mathcal{T}_r\) operates on a combination of a Hankel matrix and a low-rank matrix, efficiently managed via the Lanczos method for calculating SVD's extreme triplets through matrix-vector multiplications. Additionally, the Hankel matrix-vector multiplications can be expedited from \(O(n^2)\) to \(O(n\log n)\) using the fast Fourier transform (FFT) \cite{lu2015fast}.

\subsection{The Low-Rank Projection Step} \label{subsect: SP}

In Hankel low-rank matrix completion, the number of stationary points scales exponentially with the rank \(r\) and linearly with the size \(n\) \cite{gillard2022hankel}. A critical aspect here is selecting optimal critical points that lead to more accurate reconstructions. In practices such as Cadzow's algorithm, a straightforward and cost-effective scalar correction is typically employed to reduce the cost function while preserving the matrix rank \cite{zvonarev2015iterative}.

Drawing inspiration from this approach, we have introduced a subspace projection step. This step aims to derive a new low-rank matrix \(\boldsymbol{H}_{k,\frac{1}{2}}\) that more effectively matches the Hankel structure and the observed samples, based on the output from the \(k\)-th iteration of the PG step \(\boldsymbol{H}_k\).

Specifically, the rank-\(r\) matrix \(\boldsymbol{H}_k\) is preprocessed into a reduced SVD decomposition: \(\boldsymbol{H}_k = \boldsymbol{U}_k \boldsymbol{\Sigma}_k \boldsymbol{V}_k^{\mathsf{H}}\), with \(\boldsymbol{U}_k \in \mathbb{C}^{p \times r}\) and \(\boldsymbol{V}_k \in \mathbb{C}^{q \times r}\) comprising orthonormal columns. By keeping \(\boldsymbol{U}_k\) and \(\boldsymbol{V}_k\) fixed, the optimization problem as defined in \eqref{eq:relaxed-objective} is reformulated as:
\begin{align}
    \min_{\boldsymbol{H}_k, \boldsymbol{x}} &\delta(\text{rank}(\boldsymbol{H}_k) \le r) + \frac{1}{2} \| \boldsymbol{s} - \mathcal{P}_{\Omega}\boldsymbol{x} \|^2 
    + \frac{\beta}{2} \| \boldsymbol{H}_k - \mathcal{H}\boldsymbol{x} \|^2_F + \frac{\alpha}{2}\|\boldsymbol{x}\|^2 \nonumber \\
    &= \min_{\boldsymbol{\Sigma},\boldsymbol{x}}~ \frac{1}{2} \| \boldsymbol{s} - \mathcal{P}_{\Omega}\boldsymbol{x} \|^2 
    + \frac{\beta}{2} \| \boldsymbol{U}_k\boldsymbol{\Sigma} \boldsymbol{V}_k^{\mathsf{H}} - \mathcal{H}\boldsymbol{x} \|^2_F + \frac{\alpha}{2}\|\boldsymbol{x}\|^2,
    \label{eq:projection-optimization-problem}
\end{align}
which constitutes a least-squares problem involving variables \(\boldsymbol{\Sigma}\)\footnote{Note that $\boldsymbol{\Sigma}$ is not assumed to be diagonal} and \(\boldsymbol{x}\), with a unique closed-form solution. However, the computational complexity of solving \eqref{eq:projection-optimization-problem}, generally \(O((n+r^2)^3)\). Our preliminary simulations show that this complexity can be too much for large $n$.

As we have done previously, we can reformulate the optimization problem \eqref{eq:projection-optimization-problem} into a single-variable formulation \eqref{eq:optimization-Sigma}, significantly reducing the computational complexity from \(O((n+r^2)^3)\) to \(O(r^3 n \log n + r^4 n)\). Specifically, we recast \eqref{eq:projection-optimization-problem} as follows:
\begin{align} \label{eq:optimization-Sigma}
    &\min_{\boldsymbol{\Sigma}_k}~L(\boldsymbol{\Sigma}_k), \text{ with } \nonumber \\ 
    &L(\boldsymbol{\Sigma}_k) = \min_{\boldsymbol{x}}
    \frac{1}{2} \| \boldsymbol{s} - \mathcal{P}_{\Omega}\boldsymbol{x} \|^2 
    + \frac{\beta}{2} \| \boldsymbol{U}_k \boldsymbol{\Sigma}_k \boldsymbol{V}_k^{\mathsf{H}} - \mathcal{H}\boldsymbol{x} \|^2_F + \frac{\alpha}{2}\|\boldsymbol{x}\|^2.
\end{align}
In this formulation, the objective function \(L(\boldsymbol{\Sigma}_k)\) depends solely on the variable \(\boldsymbol{\Sigma}_k\), yet its evaluation also involves optimizing \(\boldsymbol{x}\). The optimal solutions to \eqref{eq:optimization-Sigma} are delineated below.

\begin{prop} \label{pro: gradient_h}
    Let \(\mathcal{P}_{\boldsymbol{U}_k,\boldsymbol{V}_k}(\boldsymbol{\Sigma}) = \boldsymbol{U}_k \boldsymbol{\Sigma} \boldsymbol{V}_k^{\mathsf{H}}\), and \(\mathcal{P}^*_{\boldsymbol{U}_k,\boldsymbol{V}_k}(\boldsymbol{H}) = \boldsymbol{U}_k^{\mathsf{H}} \boldsymbol{H} \boldsymbol{V}_k\). The optimal solutions to \eqref{eq:optimization-Sigma} and \eqref{eq:projection-optimization-problem} are:
    \begin{align}
        \boldsymbol{x}^{\dagger}
        &= (\alpha \mathcal{I} + \mathcal{P}_{\Omega}^*\mathcal{P}_{\Omega} + \beta\mathcal{H}^*\mathcal{H})^{-1} \left( \mathcal{P}_{\Omega}^* \boldsymbol{s} + \beta\mathcal{H}^* \mathcal{P}_{\boldsymbol{U}_k,\boldsymbol{V}_k} \boldsymbol{\Sigma}^\star_k \right), \label{eq:x-proj-closedform} \\
        \boldsymbol{\Sigma}_k^{\star}
        &= \left( \mathcal{I} - \beta \mathcal{P}^*_{\boldsymbol{U}_k,\boldsymbol{V}_k} \mathcal{H} \mathcal{L} \mathcal{H}^*\mathcal{P}_{\boldsymbol{U}_k,\boldsymbol{V}_k} \right)^{-1} \left( \mathcal{P}^*_{\boldsymbol{U}_k,\boldsymbol{V}_k} \mathcal{H} \mathcal{L}\mathcal{P}^*_{\Omega} \boldsymbol{s} \right). \label{eq:Sigma-proj-closedform}
    \end{align}  
\end{prop}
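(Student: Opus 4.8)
The plan is to exploit the fact that, although \eqref{eq:optimization-Sigma} is written as a nested minimization, the inner and outer objectives together constitute a single jointly quadratic, strictly convex problem in the pair \((\boldsymbol{x},\boldsymbol{\Sigma}_k)\). Its unique minimizer is then characterized by the two coupled stationarity equations obtained by setting the (Wirtinger) gradients with respect to \(\boldsymbol{x}\) and \(\boldsymbol{\Sigma}_k\) to zero. I would solve the first for \(\boldsymbol{x}^{\dagger}\), substitute it into the second, and solve the resulting linear equation for \(\boldsymbol{\Sigma}_k^{\star}\). Throughout, I would phrase the complex differentiation via Wirtinger derivatives, equivalently as the normal equations of a complex least-squares problem.

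First I would fix \(\boldsymbol{\Sigma}_k\) and minimize over \(\boldsymbol{x}\). Only the data-fidelity, coupling, and regularization terms depend on \(\boldsymbol{x}\), and differentiating yields the normal equation
\begin{equation*}
    (\alpha\mathcal{I} + \mathcal{P}_{\Omega}^*\mathcal{P}_{\Omega} + \beta\mathcal{H}^*\mathcal{H})\,\boldsymbol{x} = \mathcal{P}_{\Omega}^*\boldsymbol{s} + \beta\mathcal{H}^*\boldsymbol{U}_k\boldsymbol{\Sigma}_k\boldsymbol{V}_k^{\mathsf{H}}.
\end{equation*}
Since \(\boldsymbol{U}_k\boldsymbol{\Sigma}_k\boldsymbol{V}_k^{\mathsf{H}} = \mathcal{P}_{\boldsymbol{U}_k,\boldsymbol{V}_k}\boldsymbol{\Sigma}_k\) and the operator on the left is exactly \(\mathcal{L}^{-1}\) from \eqref{eq:xoptimal} (invertible and positive definite by the remark following it), inverting it gives \eqref{eq:x-proj-closedform}.

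Next I would impose stationarity in \(\boldsymbol{\Sigma}_k\). Only the coupling term \(\tfrac{\beta}{2}\|\boldsymbol{U}_k\boldsymbol{\Sigma}_k\boldsymbol{V}_k^{\mathsf{H}} - \mathcal{H}\boldsymbol{x}\|_F^2\) depends on \(\boldsymbol{\Sigma}_k\); its gradient is \(\beta\boldsymbol{U}_k^{\mathsf{H}}(\boldsymbol{U}_k\boldsymbol{\Sigma}_k\boldsymbol{V}_k^{\mathsf{H}} - \mathcal{H}\boldsymbol{x})\boldsymbol{V}_k\), which simplifies to \(\beta(\boldsymbol{\Sigma}_k - \mathcal{P}^*_{\boldsymbol{U}_k,\boldsymbol{V}_k}\mathcal{H}\boldsymbol{x})\) using the orthonormality identity \(\mathcal{P}^*_{\boldsymbol{U}_k,\boldsymbol{V}_k}\mathcal{P}_{\boldsymbol{U}_k,\boldsymbol{V}_k} = \mathcal{I}\) (valid because \(\boldsymbol{U}_k^{\mathsf{H}}\boldsymbol{U}_k = \boldsymbol{I}\) and \(\boldsymbol{V}_k^{\mathsf{H}}\boldsymbol{V}_k = \boldsymbol{I}\)). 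Setting it to zero gives \(\boldsymbol{\Sigma}_k = \mathcal{P}^*_{\boldsymbol{U}_k,\boldsymbol{V}_k}\mathcal{H}\boldsymbol{x}^{\dagger}\). Substituting \(\boldsymbol{x}^{\dagger}\) from \eqref{eq:x-proj-closedform}, collecting the \(\boldsymbol{\Sigma}_k\) terms, and rearranging produces \((\mathcal{I} - \beta\mathcal{P}^*_{\boldsymbol{U}_k,\boldsymbol{V}_k}\mathcal{H}\mathcal{L}\mathcal{H}^*\mathcal{P}_{\boldsymbol{U}_k,\boldsymbol{V}_k})\boldsymbol{\Sigma}_k = \mathcal{P}^*_{\boldsymbol{U}_k,\boldsymbol{V}_k}\mathcal{H}\mathcal{L}\mathcal{P}_{\Omega}^*\boldsymbol{s}\), which is \eqref{eq:Sigma-proj-closedform} once the operator is inverted.

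The main obstacle is justifying that \(\mathcal{I} - \beta\mathcal{P}^*_{\boldsymbol{U}_k,\boldsymbol{V}_k}\mathcal{H}\mathcal{L}\mathcal{H}^*\mathcal{P}_{\boldsymbol{U}_k,\boldsymbol{V}_k}\) is invertible, so that \eqref{eq:Sigma-proj-closedform} is well defined and the critical point is the true minimizer. I would handle this by recognizing that \(\beta\) times this operator is precisely the Schur complement of the \(\boldsymbol{x}\)-block in the joint Hessian
\begin{equation*}
    \begin{bmatrix} \mathcal{L}^{-1} & -\beta\mathcal{H}^*\mathcal{P}_{\boldsymbol{U}_k,\boldsymbol{V}_k} \\ -\beta\mathcal{P}^*_{\boldsymbol{U}_k,\boldsymbol{V}_k}\mathcal{H} & \beta\mathcal{I} \end{bmatrix},
\end{equation*}
whose \(\boldsymbol{\Sigma}_k\)-block equals \(\beta\mathcal{I}\) again by orthonormality. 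It then suffices to show this joint Hessian is positive definite: the quadratic part of the objective is nonnegative and vanishes only if \(\boldsymbol{x} = \boldsymbol{0}\) (forced by the \(\tfrac{\alpha}{2}\|\boldsymbol{x}\|^2\) term with \(\alpha > 0\)) and \(\mathcal{P}_{\boldsymbol{U}_k,\boldsymbol{V}_k}\boldsymbol{\Sigma}_k = \boldsymbol{0}\), and the latter forces \(\boldsymbol{\Sigma}_k = \boldsymbol{0}\) because \(\|\boldsymbol{U}_k\boldsymbol{\Sigma}_k\boldsymbol{V}_k^{\mathsf{H}}\|_F = \|\boldsymbol{\Sigma}_k\|_F\). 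Hence the joint Hessian is positive definite, its Schur complement is positive definite, the required inverse exists, and strict convexity guarantees that the stationary pair is the unique global minimizer, completing the argument.
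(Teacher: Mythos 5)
Your proof is correct, and its computational core coincides with the paper's: you solve the stationarity condition in \(\boldsymbol{x}\) for fixed \(\boldsymbol{\Sigma}_k\), obtaining the same normal equation and hence \eqref{eq:x-proj-closedform}, then substitute into the \(\boldsymbol{\Sigma}_k\)-stationarity condition to arrive at exactly the paper's linear system \(\left(\mathcal{I}-\beta\mathcal{P}^*_{\boldsymbol{U}_k,\boldsymbol{V}_k}\mathcal{H}\mathcal{L}\mathcal{H}^*\mathcal{P}_{\boldsymbol{U}_k,\boldsymbol{V}_k}\right)\boldsymbol{\Sigma}_k = \mathcal{P}^*_{\boldsymbol{U}_k,\boldsymbol{V}_k}\mathcal{H}\mathcal{L}\mathcal{P}^*_{\Omega}\boldsymbol{s}\). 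Where you differ is in the surrounding justification. The paper works with the reduced function \(L(\boldsymbol{\Sigma})\) and invokes its envelope-type Lemma~\ref{lemma: innergradient} to compute \(\nabla L(\boldsymbol{\Sigma})\), whereas you treat \eqref{eq:projection-optimization-problem} as a single jointly quadratic problem and impose joint stationarity; the two routes produce identical equations, but yours additionally delivers uniqueness and global optimality from joint strict convexity, a point the paper leaves implicit. More substantively, you actually \emph{prove} the invertibility of \(\mathcal{I}-\beta\mathcal{P}^*_{\boldsymbol{U}_k,\boldsymbol{V}_k}\mathcal{H}\mathcal{L}\mathcal{H}^*\mathcal{P}_{\boldsymbol{U}_k,\boldsymbol{V}_k}\): you correctly identify \(\beta\) times this operator as the Schur complement of the \(\boldsymbol{x}\)-block \(\mathcal{L}^{-1}\) in the joint Hessian (whose \(\boldsymbol{\Sigma}\)-block is \(\beta\mathcal{P}^*_{\boldsymbol{U}_k,\boldsymbol{V}_k}\mathcal{P}_{\boldsymbol{U}_k,\boldsymbol{V}_k}=\beta\mathcal{I}\) by orthonormality of \(\boldsymbol{U}_k\) and \(\boldsymbol{V}_k\), exactly as you say), and your positive-definiteness argument is sound: \(\alpha>0\) forces \(\boldsymbol{x}=\boldsymbol{0}\) on the null set of the quadratic form, and \(\|\boldsymbol{U}_k\boldsymbol{\Sigma}\boldsymbol{V}_k^{\mathsf{H}}\|_F=\|\boldsymbol{\Sigma}\|_F\) then forces \(\boldsymbol{\Sigma}=\boldsymbol{0}\). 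The paper, by contrast, merely asserts this positive definiteness in the remark following the proposition, citing the regularization term without argument. So your write-up is, if anything, more complete than the paper's own proof, at the modest cost of introducing the block-Hessian machinery in place of the paper's lighter envelope-lemma calculation.
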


\begin{remark}
   Due to the regularization term \(\frac{\alpha}{2}\|\boldsymbol{x}\|^2\), the matrix \(\mathcal{I} - \beta \mathcal{P}^*_{\boldsymbol{U}_k,\boldsymbol{V}_k} \mathcal{H} \mathcal{L} \mathcal{H}^*\mathcal{P}_{\boldsymbol{U}_k,\boldsymbol{V}_k}\) is positive definite, ensuring the existence of its inverse linear mapping.
\end{remark}

\begin{prop} \label{Theo_SP_reduction}
    Denote \(\boldsymbol{H}_{k,\frac{1}{2}} = \boldsymbol{U}_k \boldsymbol{\Sigma}_k^{\star} \boldsymbol{V}_k^{\mathsf{H}}\). The low-rank projection step satisfies the sufficient decrease property:
    \begin{equation}
        F(\boldsymbol{H}_{k,\frac{1}{2}}) \leq F(\boldsymbol{H}_k) - \left(\frac{\alpha}{\alpha + \beta}\right) \| \boldsymbol{H}_{k,\frac{1}{2}} - \boldsymbol{H}_{k}\|^2_F.
    \end{equation}
\end{prop}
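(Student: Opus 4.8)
The plan is to first collapse $F$ to its smooth part and then treat the projection step as exact minimization of a convex quadratic restricted to a fixed low-rank subspace. Both $\boldsymbol{H}_k$ and $\boldsymbol{H}_{k,\frac{1}{2}}=\boldsymbol{U}_k\boldsymbol{\Sigma}_k^{\star}\boldsymbol{V}_k^{\mathsf{H}}$ have rank at most $r$, so the indicator term $g$ in \eqref{eq:g-x} vanishes at both points and $F=f$ there. Writing $L(\boldsymbol{\Sigma})=f(\boldsymbol{U}_k\boldsymbol{\Sigma}\boldsymbol{V}_k^{\mathsf{H}})$ as in \eqref{eq:optimization-Sigma}, using that the reduced SVD gives $f(\boldsymbol{H}_k)=L(\boldsymbol{\Sigma}_k)$ while Proposition \ref{pro: gradient_h} gives $f(\boldsymbol{H}_{k,\frac{1}{2}})=L(\boldsymbol{\Sigma}_k^{\star})=\min_{\boldsymbol{\Sigma}}L(\boldsymbol{\Sigma})$, the claim reduces to the strong-convexity-type estimate $L(\boldsymbol{\Sigma}_k)-L(\boldsymbol{\Sigma}_k^{\star})\ge\frac{\alpha}{\alpha+\beta}\|\boldsymbol{\Sigma}_k-\boldsymbol{\Sigma}_k^{\star}\|_F^2$, since $\|\boldsymbol{\Sigma}_k-\boldsymbol{\Sigma}_k^{\star}\|_F=\|\boldsymbol{H}_{k,\frac{1}{2}}-\boldsymbol{H}_k\|_F$ by orthonormality of $\boldsymbol{U}_k,\boldsymbol{V}_k$.

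Next I would exploit that $f$ is a convex quadratic: by Proposition \ref{prop:Lipschitz-Constant} its gradient is affine, so $L$ is quadratic with constant Hessian. Setting $\boldsymbol{D}=\boldsymbol{H}_{k,\frac{1}{2}}-\boldsymbol{H}_k$ and using the first-order optimality $\nabla L(\boldsymbol{\Sigma}_k^{\star})=\mathcal{P}^{*}_{\boldsymbol{U}_k,\boldsymbol{V}_k}\nabla f(\boldsymbol{H}_{k,\frac{1}{2}})=0$ together with $\boldsymbol{D}$ lying in the range of $\mathcal{P}_{\boldsymbol{U}_k,\boldsymbol{V}_k}$, the linear term in the second-order expansion drops out and the decrease becomes the \emph{exact} identity $L(\boldsymbol{\Sigma}_k)-L(\boldsymbol{\Sigma}_k^{\star})=\tfrac12\langle\boldsymbol{D},\nabla^{2}f\,\boldsymbol{D}\rangle$. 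It therefore suffices to prove the uniform curvature bound $\langle\boldsymbol{D},\nabla^{2}f\,\boldsymbol{D}\rangle\ge\tfrac{2\alpha}{\alpha+\beta}\|\boldsymbol{D}\|_F^2$ along directions in the subspace.

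For this I would substitute the Hessian read off from \eqref{eq:gradient}, namely $\nabla^{2}f=\beta\mathcal{I}-\beta^{2}\mathcal{H}\mathcal{L}\mathcal{H}^{*}$, to obtain $\langle\boldsymbol{D},\nabla^{2}f\,\boldsymbol{D}\rangle=\beta\|\boldsymbol{D}\|_F^2-\beta^{2}\langle\boldsymbol{v},\mathcal{L}\boldsymbol{v}\rangle$ with $\boldsymbol{v}=\mathcal{H}^{*}\boldsymbol{D}$. The Remark after Proposition \ref{prop:Lipschitz-Constant} makes both $\mathcal{L}$ and $\mathcal{W}=\mathcal{H}^{*}\mathcal{H}$ diagonal, so this is a diagonal quadratic form in $\boldsymbol{v}$: writing $\mathcal{L}=\mathrm{diag}(1/(\alpha+[\Omega]_a+\beta w_a))$, the combination $\mathcal{W}^{-1}-\beta\mathcal{L}$ is entrywise positive, its $a$-th entry equal to $(\alpha+[\Omega]_a)/\big(w_a(\alpha+[\Omega]_a+\beta w_a)\big)$, which is exactly where the regularizer $\tfrac{\alpha}{2}\|\boldsymbol{x}\|^{2}$ earns the strict positivity noted in the Remark after Proposition \ref{pro: gradient_h}. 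I would then bound $\|\boldsymbol{D}\|_F^2$ below by its Hankel component, using the orthogonal split $\boldsymbol{D}=\mathcal{H}\mathcal{S}\boldsymbol{D}+(\boldsymbol{D}-\mathcal{H}\mathcal{S}\boldsymbol{D})$ (equivalently Cauchy–Schwarz on each skew-diagonal) to get $\|\boldsymbol{D}\|_F^2\ge\langle\boldsymbol{v},\mathcal{W}^{-1}\boldsymbol{v}\rangle$, and assemble a per-skew-diagonal scalar inequality in $\alpha,\beta,w_a$.

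The main obstacle is this last step: turning the per-skew-diagonal weights $w_a$, which range from $1$ up to $\min(p,q)$, into the clean, size-independent constant $\tfrac{\alpha}{\alpha+\beta}$. The tension is that the positive gain $\beta\|\boldsymbol{D}\|_F^2$ and the loss $\beta^{2}\langle\boldsymbol{v},\mathcal{L}\boldsymbol{v}\rangle$ nearly cancel on Hankel directions whose mass sits on large $w_a$, so the entire margin must be squeezed out of the $\alpha$-terms; tracking the $w_a$-dependence without letting it leak into the final constant is the delicate part and is precisely what the parameter $\alpha$ is present to control. An equivalent packaging I would cross-check is the Schur-complement form $\nabla^{2}L=\beta\mathcal{I}-\beta^{2}\mathcal{P}^{*}_{\boldsymbol{U}_k,\boldsymbol{V}_k}\mathcal{H}\mathcal{L}\mathcal{H}^{*}\mathcal{P}_{\boldsymbol{U}_k,\boldsymbol{V}_k}$ obtained by partially minimizing the jointly quadratic objective over $\boldsymbol{x}$, which couples the modulus $\beta$ in $\boldsymbol{\Sigma}$ with the modulus at least $\alpha$ in $\boldsymbol{x}$ and clarifies the origin of the harmonic-type constant $\tfrac{\alpha}{\alpha+\beta}$.
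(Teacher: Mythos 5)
Your setup is correct and, up to packaging, it is the same skeleton as the paper's proof: both arguments collapse $F$ to the single-variable quadratic $L$ of \eqref{eq:optimization-Sigma}, use orthonormality of $\boldsymbol{U}_k,\boldsymbol{V}_k$ to identify $\|\boldsymbol{\Sigma}_k-\boldsymbol{\Sigma}_k^{\star}\|_F$ with $\|\boldsymbol{H}_{k,\frac{1}{2}}-\boldsymbol{H}_k\|_F$, and convert the decrease into a curvature statement for the difference direction. The paper phrases this as the two inequalities in \eqref{eq: suffcient_decrease_SP}: the decrease dominates the $\mathcal{M}$-weighted norm of $\boldsymbol{\Sigma}_k-\boldsymbol{\Sigma}_k^{\star}$ (its inequality (b), from least-squares optimality — your exact second-order identity with vanishing linear term is the same fact), and then a single spectral claim, $\lambda_{\min}(\mathcal{M})\geq\frac{\alpha}{\alpha+\beta}$ for the \emph{compressed} operator $\mathcal{M}=\mathcal{I}-\beta\mathcal{P}^{*}_{\boldsymbol{U}_k,\boldsymbol{V}_k}\mathcal{H}\mathcal{L}\mathcal{H}^{*}\mathcal{P}_{\boldsymbol{U}_k,\boldsymbol{V}_k}$ acting on $r\times r$ matrices (its inequality (a)).

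The gap is that you never prove this decisive inequality, and the route you sketch for it cannot deliver the stated constant. Your own formulas show why: combining $\|\boldsymbol{D}\|_F^2\geq\langle\boldsymbol{v},\mathcal{W}^{-1}\boldsymbol{v}\rangle$ with the entrywise-positive diagonal $\mathcal{W}^{-1}-\beta\mathcal{L}$ gives, along a Hankel direction supported on skew-diagonal $a$, the curvature coefficient $\frac{\beta(\alpha+[\Omega]_a)}{\alpha+[\Omega]_a+\beta w_a}$; for an unobserved central diagonal ($[\Omega]_a=0$, $w_a=q$) this equals $\frac{\alpha\beta}{\alpha+\beta q}$, which falls below the required $\frac{2\alpha}{\alpha+\beta}$ whenever $\beta(\alpha+\beta)<2(\alpha+\beta q)$ — i.e., for essentially every practical choice of $\beta$ in the paper (where $\beta\approx S_p$ and $q\approx n/2$). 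So the best constant your ambient, per-skew-diagonal estimate can produce degrades like $\frac{\alpha}{\alpha+\beta q}$, a factor of order $n$ weaker than claimed; the tension you flag as ``the delicate part'' is not a technicality but a genuine failure of the approach without further input. The missing ingredient is exactly the low-rank compression you deferred to a final ``cross-check'': the paper's inequality (a) is a statement about $\mathcal{P}^{*}_{\boldsymbol{U}_k,\boldsymbol{V}_k}\mathcal{H}\mathcal{L}\mathcal{H}^{*}\mathcal{P}_{\boldsymbol{U}_k,\boldsymbol{V}_k}$, not about $\mathcal{H}\mathcal{L}\mathcal{H}^{*}$, and any complete argument must exploit how rank-$r$ directions $\boldsymbol{U}_k\boldsymbol{\Sigma}\boldsymbol{V}_k^{\mathsf{H}}$ distribute mass across skew-diagonals (note the ambient spectrum genuinely contains eigenvalues $\frac{\beta w_a}{\alpha+[\Omega]_a+\beta w_a}$, so no refinement of your diagonal bounds alone can close it). Be aware also that the paper asserts this eigenvalue bound without derivation, so completing your proof requires supplying an argument for $\lambda_{\min}(\mathcal{M})\geq\frac{\alpha}{\alpha+\beta}$ that the paper itself does not spell out.
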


The computation of Equations \eqref{eq:x-proj-closedform} and \eqref{eq:Sigma-proj-closedform} is efficiently facilitated by the Hankel and low-rank structure. The reduction in variable size from \(n+r^2\) to \(r^2\) significantly reduces computational complexity, making it feasible to find the optimal values using the conjugate gradient (CG) method for linear mapping. The computational complexity for the low-rank projection step is \(O(r^3n\log n + r^4n)\), with a detailed analysis included in Section \ref{section: CCA}.

This low-rank projection step can be viewed as an adjustment of SVD to enhance data fidelity and maintain the Hankel structure following each MPG step.

\subsection{LPPG Algorithm and Convergence}\label{subsection: algorithm}

We integrate the modified proximal gradient (MPG) with an optimized step-size and the low-rank subspace projection into our LPPG algorithm, as detailed in Algorithm \ref{LPPG}.

\begin{algorithm}
\caption{LPPG for equal Weighted Spectrally Sparse Signal Reconstruction}
\label{LPPG}
\textbf{Input:} \(\boldsymbol{s}\), \(\beta\), \(\alpha\), \(\epsilon\), \(\gamma = \frac{1}{\beta}\), \(\boldsymbol{H}_0 \leftarrow \mathcal{T}_r\mathcal{H}\boldsymbol{s}\).
\begin{algorithmic}
    \While{\(\| \partial F(\boldsymbol{H}_{k}) \|_F \geq \epsilon \| \boldsymbol{H}_{k} \|_F\)}	
        \State Compute \(\boldsymbol{H}_{k,\frac{1}{2}} \leftarrow \boldsymbol{U}_k \boldsymbol{\Sigma^{\star}}_k \boldsymbol{V}^{\mathsf{H}}_k\) via subspace projection of \(\boldsymbol{H}_{k}\).
        \State Compute \(\boldsymbol{H}_{k+1} \leftarrow \mathcal{T}_r \left(\boldsymbol{H}_{k,\frac{1}{2}}-\gamma  \nabla f(\boldsymbol{H}_{k,\frac{1}{2}}) \right)\).
    \EndWhile
\end{algorithmic}
\textbf{Output:} \(\boldsymbol{H}_k\) from the last iteration, \(\boldsymbol{x}^{\dag} \leftarrow \mathcal{L} (\mathcal{P}^*_{\Omega}\boldsymbol{s} + \beta\mathcal{H}^* \boldsymbol{H}_{k})\).
\end{algorithm}

The objective function \eqref{eq:F-x} monotonously decreases after each iteration step due to the optimization of the subproblem. We also present the convergence result of the LPPG algorithm in Theorem \ref{theo: convergence}.

\begin{theorem} \label{theo: convergence}
The sequences \(\{\boldsymbol{H}_{k,\frac{1}{2}}\}\) and \(\{\boldsymbol{H}_k\}\) generated in Algorithm \ref{LPPG} are bounded. If \(\boldsymbol{H}^*\) is any accumulation point of \(\{\boldsymbol{H}_k\}\), then \(0 \in \partial F(\boldsymbol{H}^*)\), i.e., \(\boldsymbol{H}^*\) is a critical point. We have:
\begin{equation} \label{eq: partialFlimit}
    -\nabla f(\boldsymbol{H}_{k,\frac{1}{2}}) + \nabla f(\boldsymbol{H}_{k+1}) - \frac{1}{\gamma} (\boldsymbol{H}_{k+1}-\boldsymbol{H}_{k,\frac{1}{2}}) \in \partial F(\boldsymbol{H}_{k+1}),
\end{equation}
and
\begin{align} \label{eq: boundgradient}
    &\| -\nabla f(\boldsymbol{H}_{k,\frac{1}{2}}) + \nabla f(\boldsymbol{H}_{k+1}) - \frac{1}{\gamma} (\boldsymbol{H}_{k+1}-\boldsymbol{H}_{k,\frac{1}{2}})\|_F \\
    \leq &\left(\frac{1}{\gamma} + L_f\right) \| \boldsymbol{H}_{k+1}-\boldsymbol{H}_{k,\frac{1}{2}} \|_F \rightarrow 0 \quad  \text{ as } \quad k \rightarrow \infty.
\end{align}

The convergence rate is given by:
\[
\min_{i=0,\cdots,K} ~  \|\partial F(\boldsymbol{H}_{i+1})  \|_F^2 \leq \frac{c_0}{K+1} ( F(\boldsymbol{H}_{0}) - F^*),
\]
where \(c_0 = \frac{(\beta-L_f)\beta^4}{(\alpha+\beta)^2}\).
\end{theorem}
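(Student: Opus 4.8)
The plan is to treat Algorithm~\ref{LPPG} as a descent method and follow the standard descent-plus-subgradient template used in nonconvex proximal-gradient analysis, with the two half-steps supplying two separate sufficient-decrease inequalities. First I would record the decrease of each half-step. Proposition~\ref{Theo_SP_reduction} already gives the projection decrease $F(\boldsymbol{H}_{k,\frac12})\le F(\boldsymbol{H}_k)-\frac{\alpha}{\alpha+\beta}\|\boldsymbol{H}_{k,\frac12}-\boldsymbol{H}_k\|_F^2$. For the MPG half-step, since $\boldsymbol{H}_{k+1}=\mathrm{prox}_{\gamma g}(\boldsymbol{H}_{k,\frac12}-\gamma\nabla f(\boldsymbol{H}_{k,\frac12}))$ with $\gamma=1/\beta$ and $L_f<\beta$ by Proposition~\ref{prop:Lipschitz-Constant}, the descent lemma for $f$ combined with the defining inequality of the proximal map yields $F(\boldsymbol{H}_{k+1})\le F(\boldsymbol{H}_{k,\frac12})-\frac{\beta-L_f}{2}\|\boldsymbol{H}_{k+1}-\boldsymbol{H}_{k,\frac12}\|_F^2$, whose coefficient is strictly positive. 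Summing the two produces a genuine per-iteration decrease of $F$; as $F\ge 0$ is bounded below, $\{F(\boldsymbol{H}_k)\}$ converges to some $F^*$ and the squared increments $\|\boldsymbol{H}_{k,\frac12}-\boldsymbol{H}_k\|_F^2$ and $\|\boldsymbol{H}_{k+1}-\boldsymbol{H}_{k,\frac12}\|_F^2$ are summable, hence tend to $0$.

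For boundedness, all iterates lie in the sublevel set $\{F\le F(\boldsymbol{H}_0)\}$, and since they have rank $\le r$ we have $g\equiv 0$ there, so it suffices that $f$ is coercive. Decomposing $\boldsymbol{H}=\mathcal{P}_{\mathrm{Hankel}}\boldsymbol{H}+\boldsymbol{H}_\perp$ and using that $\mathcal{H}\boldsymbol{x}$ is Hankel, the penalty splits as $\frac\beta2\|\boldsymbol{H}_\perp\|_F^2+\frac\beta2\|\mathcal{P}_{\mathrm{Hankel}}\boldsymbol{H}-\mathcal{H}\boldsymbol{x}\|_F^2$; if $\|\boldsymbol{H}\|_F\to\infty$ then either $\|\boldsymbol{H}_\perp\|_F\to\infty$, or the Hankel part forces $\|\mathcal{H}\boldsymbol{x}\|_F\to\infty$ (hence $\frac\alpha2\|\boldsymbol{x}\|^2\to\infty$ via left-invertibility of $\mathcal{H}$), or the mismatch term blows up. In each case $f\to\infty$, so the sublevel set is bounded; the regularizer $\alpha>0$ is essential precisely here.

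Next I would establish the subgradient inclusion. The first-order optimality of the proximal step, together with the facts that $g$ is the (nonconvex) rank indicator and $f$ is $C^1$ so that the exact sum rule $\partial F=\nabla f+\partial g$ holds for the limiting subdifferential, gives $\frac1\gamma(\boldsymbol{H}_{k,\frac12}-\boldsymbol{H}_{k+1})-\nabla f(\boldsymbol{H}_{k,\frac12})\in\partial g(\boldsymbol{H}_{k+1})$, which rearranges to \eqref{eq: partialFlimit}; then \eqref{eq: boundgradient} follows from the triangle inequality and Lipschitz continuity of $\nabla f$. A sharper route I would also exploit is to substitute the explicit gradient \eqref{eq:gradient}: the affine parts cancel and the subgradient element collapses to $\boldsymbol{v}_{k+1}:=-\beta^2\mathcal{H}\mathcal{L}\mathcal{H}^*(\boldsymbol{H}_{k+1}-\boldsymbol{H}_{k,\frac12})\in\partial F(\boldsymbol{H}_{k+1})$, whose norm is controlled by $\|\mathcal{H}\mathcal{L}\mathcal{H}^*\|$ times the vanishing increment, so $\boldsymbol{v}_{k+1}\to 0$. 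For an accumulation point $\boldsymbol{H}^*=\lim_j\boldsymbol{H}_{k_j}$, vanishing increments force $\boldsymbol{H}_{k_j+1}\to\boldsymbol{H}^*$ as well; closedness of the rank-$r$ set gives $g(\boldsymbol{H}^*)=0$, hence $F(\boldsymbol{H}_{k_j+1})\to F(\boldsymbol{H}^*)$, and outer semicontinuity (graph-closedness) of the limiting subdifferential upgrades $\boldsymbol{v}_{k_j+1}\to 0$ to $0\in\partial F(\boldsymbol{H}^*)$.

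Finally, telescoping the combined decrease over $k=0,\dots,K$ bounds $\sum_k\|\boldsymbol{H}_{k+1}-\boldsymbol{H}_{k,\frac12}\|_F^2$ by $\frac{2}{\beta-L_f}(F(\boldsymbol{H}_0)-F^*)$; pairing this with $\|\boldsymbol{v}_{k+1}\|_F\le\beta^2\|\mathcal{H}\mathcal{L}\mathcal{H}^*\|\,\|\boldsymbol{H}_{k+1}-\boldsymbol{H}_{k,\frac12}\|_F$ and taking the minimum over the $K+1$ indices yields the stated $O(1/K)$ rate, with $c_0$ emerging as the ratio of the squared subgradient constant $\beta^4\|\mathcal{H}\mathcal{L}\mathcal{H}^*\|^2$ to the decrease coefficient $\frac{\beta-L_f}{2}$. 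I expect the main obstacle to be exactly this last bookkeeping: obtaining a tight operator-norm estimate of $\mathcal{H}\mathcal{L}\mathcal{H}^*$, which through the diagonal structure of $\mathcal{L}=(\alpha\mathcal{I}+\mathcal{P}_\Omega^*\mathcal{P}_\Omega+\beta\mathcal{H}^*\mathcal{H})^{-1}$ and the skew-diagonal weights $w_a$ carries the $(\alpha+\beta)$ and $\beta$ dependence, so that the constants assemble into $c_0=\frac{(\beta-L_f)\beta^4}{(\alpha+\beta)^2}$. A secondary subtlety requiring care is the nonconvex subdifferential machinery: justifying the proximal optimality condition and the sum rule for the rank indicator, and invoking graph-closedness together with the matching energy convergence $F(\boldsymbol{H}_{k_j+1})\to F(\boldsymbol{H}^*)$.
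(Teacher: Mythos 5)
Your proposal is correct and follows essentially the same route as the paper's own proof: sufficient decrease of the MPG half-step via prox optimality plus the descent lemma, Proposition~\ref{Theo_SP_reduction} for the projection half-step, telescoping so that \(\|\boldsymbol{H}_{k+1}-\boldsymbol{H}_{k,\frac{1}{2}}\|_F\to 0\), the inclusion \eqref{eq: partialFlimit} from the proximal optimality condition, graph-closedness of the limiting subdifferential at accumulation points (with \(F(\boldsymbol{H}_{k_j+1})\to F(\boldsymbol{H}^*)\) secured, as you say, by closedness of the rank-\(r\) set), and the collapsed subgradient element \(\beta^2\mathcal{H}\mathcal{L}\mathcal{H}^*(\boldsymbol{H}_{k,\frac{1}{2}}-\boldsymbol{H}_{k+1})\) of \eqref{eq:SC_F} for the rate.

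Two points where you diverge are actually improvements on the paper. First, the paper asserts boundedness as evident from the sublevel set; your coercivity argument (or, more quickly, noting that \(f\) is a strongly convex quadratic in \(\boldsymbol{H}\) with Hessian \(\beta\mathcal{I}-\beta^2\mathcal{H}\mathcal{L}\mathcal{H}^*\succ 0\)) supplies the missing justification, and you correctly identify \(\alpha>0\) as essential. Second, your constant bookkeeping — dividing the squared subgradient constant by the decrease coefficient \((\beta-L_f)/2\) — is the correct mechanics and puts \(\beta-L_f\) in the \emph{denominator} of \(c_0\); the stated \(c_0=\frac{(\beta-L_f)\beta^4}{(\alpha+\beta)^2}\), with \(\beta-L_f\) in the numerator, does not follow from the paper's own summed inequality \eqref{eq: sum} and appears to be a slip, so do not contort your derivation to reproduce it. Relatedly, on the operator norm you flag as the main obstacle: the nonzero eigenvalues of \(\beta^2\mathcal{H}\mathcal{L}\mathcal{H}^*\) are \(\beta^2 w_a/(\alpha+\delta_a+\beta w_a)\) with \(\delta_a\in\{0,1\}\) indicating observation, and these exceed \(\beta^2/(\alpha+\beta)\) on unobserved skew-diagonals with \(w_a\ge 2\); the bound certifiable uniformly is \(\|\beta^2\mathcal{H}\mathcal{L}\mathcal{H}^*\|<\beta\) as in Proposition~\ref{prop:Lipschitz-Constant}, so the \((\alpha+\beta)^{-2}\) factor is itself only approximate (harmless in practice since \(\alpha\) is taken minuscule).
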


The proof involves three steps: 1) establishing sufficient decrease; 2) proving that \(\boldsymbol{H}^*\) is a critical point; and 3) determining the convergence rate. Detailed proof is provided in Section \ref{subsection：convergence}.

\begin{remark}
    The convergence primarily stems from the MPG step, showing that \(\|\partial F  \|_F \le \left(\frac{1}{\gamma} + L_{\nabla f}\right)\| \boldsymbol{H}_{k+1}- \boldsymbol{H}_{k,\frac{1}{2}}\|_F\), which approaches zero. The additional low-rank subspace projection further decreases the objective function, leading to faster convergence and better reconstruction empirically, as it aligns with the observed samples and Hankel structure.
\end{remark}

\subsection{Stopping Criteria} \label{subsect: stopping criteria}

Previous studies on non-convex approaches, such as \cite{cai2018spectral,cai2019fast,wang2021fast,cadzow1988signal,gillard2010cadzow}, typically employed stopping criteria based on the relative difference between adjacent iterations, calculated as \(\|\boldsymbol{H}_{k+1}-\boldsymbol{H}_{k}\|_2/\| \boldsymbol{H}_{k} \|_2\). This criterion is effective when the initial guess lies within the basin of attraction of the global minimum. However, in practice, particularly when frequencies to be estimated are not well-separated, this criterion may be inadequate due to the potential influence of small steps implicitly leading the process.

Our LPPG method addresses this issue by directly examining the convergence to critical points through the subgradient of \(F\). To ascertain the accumulation point of \(\{ \boldsymbol{H}_k \}\), we ensure that \(\partial F(\boldsymbol{H}_k)\) approaches zero. Given that \(g\) is not differentiable, we utilized the optimality condition of the proximal operator \eqref{eq:PGHankel} to deduce:
\begin{align}
    -\frac{1}{\gamma} (\boldsymbol{H}_{k+1}-(\boldsymbol{H}_{k,\frac{1}{2}} - \gamma \nabla f(\boldsymbol{H}_{k,\frac{1}{2}})  )) \in \partial g(\boldsymbol{H}_{k+1}) .
\end{align}
Consequently, we have :
\begin{align} 
    &-\frac{1}{\gamma} (\boldsymbol{H}_{k+1}-(\boldsymbol{H}_{k,\frac{1}{2}} - \gamma \nabla f(\boldsymbol{H}_{k,\frac{1}{2}})  )) + \nabla f(\boldsymbol{H}_{k+1}) \nonumber \\
    =& \frac{1}{\gamma}  (\boldsymbol{H}_{k,\frac{1}{2}}- \boldsymbol{H}_{k+1}  )  + ( \nabla f(\boldsymbol{H}_{k+1}) - \nabla f(\boldsymbol{H}_{k,\frac{1}{2}})) \nonumber \\
    = & ((\frac{1}{\gamma} - \beta )\mathcal{I}+\beta^2 \mathcal{H}\mathcal{L}\mathcal{H}^*)(\boldsymbol{H}_{k,\frac{1}{2}}- \boldsymbol{H}_{k+1}  )\nonumber \\ = & \beta^2 \mathcal{H}\mathcal{L}\mathcal{H}^*(\boldsymbol{H}_{k,\frac{1}{2}}- \boldsymbol{H}_{k+1}  )  \in \partial F(\boldsymbol{H}_{k+1}). \label{eq:SC_F}
\end{align}
This formulation allows efficient verification of \( \beta^2 \mathcal{H}\mathcal{L}\mathcal{H}^*(\boldsymbol{H}_{k,\frac{1}{2}}- \boldsymbol{H}_{k+1}  ) \) as detailed in Section \ref{section: CCA}.

The stopping criteria correlate with the choice of \(\beta\): in noiseless scenarios, we expect perfect recovery with large steps, leading the subgradient of \(F\) to approach zero. In contrast, under heavy noise, the subgradient of \(F\) may oscillate around local minima, prompting an earlier termination.

\section{Computational Complexity Analysis} \label{section: CCA}

This section elucidates the leading computational complexity of each iteration of our proposed LPPG algorithm as \(O(r^4n + r^3n\log n)\), significantly lower than the \(O(n^3)\) complexity typically associated with large size SSS signals. This reduction is achieved by exploiting the low-rank and Hankel structures inherent in the problem, rendering the LPPG algorithm effective for large-scale SSS recovery.

We base our computational complexity analysis on several fundamental operations relevant to our algorithm:

\begin{itemize}
    \item \textbf{Hankel Matrix-Vector Multiplication:} For a \(p \times q\) Hankel matrix mapping from an array of size \(n\), the matrix-vector multiplication can be implemented efficiently using FFT with \(O(n \log n)\) flops \cite{golub2013matrix}.
    
    \item \textbf{Low-Rank Matrix-Vector Multiplication:} The complexity of multiplying a rank-$r$ square matrix of size $n \times n$ with a vector is $O(rn)$. Consider a rank \(r\) matrix \(\boldsymbol{A} \in \mathbb{C}^{n \times n}\) represented in SVD form as \(\boldsymbol{U} \boldsymbol{\Sigma V}^{\mathsf{H}}\), where \(\boldsymbol{U}\) and \(\boldsymbol{V}^{\mathsf{H}}\) have \(r\) columns and rows, respectively. The complexity of multiplying \(\boldsymbol{A}\) by a vector \(\boldsymbol{x}\) is:
   \begin{equation}
        \underbrace{\boldsymbol{A}}_{n\times n}\times \underbrace{\boldsymbol{x}}_{n\times 1} = \underbrace{\boldsymbol{U}}_{n\times r} \times \large [\underbrace{\boldsymbol{\Sigma }}_{r \times r}\times  \large( \underbrace{\boldsymbol{V}^{\mathsf{H}}}_{r \times n} \times \underbrace{\boldsymbol{x}}_{n\times 1} \large) \large]
    \end{equation} 
    In the LPPG process, the SVD of the low-rank matrix is explicitly computed, leading to a complexity of \(O(rn)\) for \(\boldsymbol{V}^{\mathsf{H}}\boldsymbol{x}\), \(O(r^2)\) for \(\boldsymbol{\Sigma}\) times \(\boldsymbol{V}^{\mathsf{H}}\boldsymbol{x}\), and \(O(rn)\) for \(\boldsymbol{U}\) times the resulting product ($r \leq n$).

    \item \textbf{Hankel Mapping \(\mathcal{H}\):} The \(p \times q\) Hankel matrix \(\mathcal{H}x\) is efficiently stored in \(\boldsymbol{x}\) with \(O(n)\) complexity due to the repetition of elements in \(\mathcal{H}x\). Thus, addition, scaling, and inner products within the Hankel subspace require \(O(n)\) flops.

    \item \textbf{Adjoint of Hankel Mapping \(\mathcal{H}^*\):} The adjoint operator \(\mathcal{H}^*\) transforms \(p \times q\) matrices into vectors of length \(n\). It is noticed that during the iterations of the LPPG algorithm, $\{\boldsymbol{H}_k \}$ and $ \{\boldsymbol{H}_{k,\frac{1}{2} }\}$ are rank-$r$ matrices, admitting a decomposition $\boldsymbol{H} = \boldsymbol{U} \boldsymbol{\Sigma} \boldsymbol{V}^{\mathsf{H}}$. In our algorithm, $\mathcal{H}^*$ always operates after $\mathcal{P}_{\boldsymbol{U},\boldsymbol{V}}$ as:
    \begin{align}
      & \mathcal{H}^* \mathcal{P}_{\boldsymbol{U},\boldsymbol{V}} \boldsymbol{\Sigma} = \mathcal{H}^* \underbrace{\boldsymbol{U} \boldsymbol{\Sigma}}_{\hat{\boldsymbol{U}}} \boldsymbol{V}^{\mathsf{H}} \nonumber \\
      =& \mathcal{H}^*\hat{\boldsymbol{U}} \boldsymbol{I} \boldsymbol{V}^{\mathsf{H}}
      =\sum_{i=1}^r  \mathcal{H}^* [ \hat{\boldsymbol{U}}(:,i) \boldsymbol{V}(:,i)^{\mathsf{H}}].
    \end{align}
    Noting that 
    \begin{equation} \label{eq:convolution}
          \left[ \mathcal{H}^* [ \hat{\boldsymbol{U}}(:,i) \boldsymbol{V}(:,i)^{\mathsf{H}}] \right]_a = \sum_{p=0,q=0}^{p+q=a} \hat{\boldsymbol{U}}(p,i) \bar{\boldsymbol{V}}(q,i),
    \end{equation}
    where $a=0,1,\dots,n-1$. \eqref{eq:convolution} can be computed by fast convolution using $O(n\log n)$ flops and $\boldsymbol{U} \boldsymbol{\Sigma}$ needs $O(nr^2)$ flops. Therefore, 
    $\mathcal{H}^* \mathcal{P}_{\boldsymbol{U},\boldsymbol{V}} \boldsymbol{\Sigma}$ requires $O(nr^2 + rn \log n)$ flops. 
\end{itemize}

With the aforementioned preliminaries, it is evident that the LPPG algorithm effectively balances computational efficiency with its capability to handle large-scale SSS recovery problems, leveraging the unique structures of the matrices involved. Building on these efficient calculations, we analyze the computational complexity in each iteration, which encompasses the following three parts:

\begin{itemize}
    \item \textbf{Subspace Projection Step:} This step involves solving the following equation for \(\boldsymbol{\Sigma}\):
    \begin{equation}
      \left( \mathcal{I}- \beta \mathcal{P}^*_{\boldsymbol{U},\boldsymbol{V}} \mathcal{H} \mathcal{L} \mathcal{H^*}\mathcal{P}_{\boldsymbol{U},\boldsymbol{V}} \right) \boldsymbol{\Sigma} = \mathcal{P}^*_{\boldsymbol{U},\boldsymbol{V}} \mathcal{H} \mathcal{L}\mathcal{P}^*_{\Omega} \boldsymbol{s}. \label{eq:CG}
    \end{equation}
    This general linear equation can be solved using the conjugate gradient (CG) method \cite{hestenes1952methods} with at most \(r^2\) iterations, where \(r^2\) is the size of the variable \(vec(\boldsymbol{\Sigma})\). We have shown that each iteration of the linear mapping in \eqref{eq:CG} requires \(O(nr^2+rn\log n)\) floating-point operations (flops). Hence, the total computational complexity for the subspace projection step is \(O(nr^4+r^3n\log n)\).

    \item \textbf{Modified PG Step:} 
    As delineated in the proximal mapping \eqref{eq:PGHankel}, the modified PG step entails solving a rank-\(r\) truncated singular value decomposition (SVD) of a linear combination of a Hankel matrix and a low-rank matrix. Constructing this Hankel matrix demands \(O(r^2n + rn\log n)\) flops. We implement the fast \(r\)-truncated SVD for the linear combination matrix using the Lanczos method \cite{lanczos1950iteration,baglama2005augmented} with \(O(r n \log n + r^2 n)\) flops, based on the fast matrix-vector multiplication of the Hankel matrix and low-rank matrix. Therefore, the total computational complexity of the modified PG step is \(O(r^2 n + r n \log n)\).

    \item \textbf{Stopping Criteria:} As indicated in \eqref{eq:SC_F}, after projecting to Hankel space with \(O(r^2n+ r n\log n)\) flops via \(\mathcal{H}^*\), we can compute \(\| \partial F \|_F\) in \(O(n)\) flops. Thus, our stopping criteria can be efficiently verified with \(O(r^2n + r n\log n)\) flops.
\end{itemize}

In summary, the computational complexity of our LPPG algorithm is \(O(r^4n+r^3n\log n)\) per iteration. This relatively low complexity in terms of \(n\) renders our algorithm well-suited for large-scale SSS recovery problems.

\section{High Dimensional Problems} \label{section: High-dimen}

Building on the Hankel-based approaches of \cite{chen2013spectral,cai2018spectral,cai2019fast,wang2021fast}, our results can be extended to higher dimensions by leveraging the Hankel structure of multidimensional spectrally sparse signals. We illustrate this extension through a two-dimensional example, although the principles apply similarly in \(d\) dimensions. A two-dimensional spectrally sparse array \(\boldsymbol{x} \in \mathbb{C}^{n_1 \times n_2 }\) can be expressed as:
\begin{equation}
    \boldsymbol{x} = \sum_{k=1}^{r} b_k \boldsymbol{y}(f_{1k},\tau_{1k};~n_1) \circ \boldsymbol{y}(f_{2k},\tau_{2k};~n_2)  \in \mathbb{C}^{n_1 \times n_2 },
\end{equation}
where \(b_k \in \mathbb{C}\) are complex amplitudes,  \(\boldsymbol{f}_k=(f_{1k},f_{2k}) \in [0,1)^2\) denote the frequencies and \(\tau_{1k}, \tau_{2k} \geq 0\) are the damping factors respectively. Given a set of indices \(\boldsymbol{\Omega}=\{ (\omega_1, \omega_2) \in [n_1] \times [n_2] \}\) for the known entries of \(\boldsymbol{x}\), the challenge is to reconstruct \(\boldsymbol{x}\) from \(\mathcal{P}_{\Omega}(\boldsymbol{x})\), which can be approached by exploring the low-rank Hankel structures as in one dimension.

The corresponding Hankel matrix for \(\boldsymbol{x}\) is constructed recursively:
\[
\mathcal{H}(\boldsymbol{x})
    =
    \begin{bmatrix}
        \mathcal{H}(\boldsymbol{x}[:,0]) & \mathcal{H}(\boldsymbol{x}[:,1]) & \dots & \mathcal{H}(\boldsymbol{x}[:,q_2-1])\\ 
        \mathcal{H}(\boldsymbol{x}[:,1]) & \mathcal{H}(\boldsymbol{x}[:,2]) & \dots & \mathcal{H}(\boldsymbol{x}[:,q_2])\\ 
        \vdots & \vdots & \ddots & \vdots \\
        \mathcal{H}(\boldsymbol{x}[:,p_2-1]) & \mathcal{H}(\boldsymbol{x}[:,p_2]) & \dots & \mathcal{H}(\boldsymbol{x}[:,n_2-1])
    \end{bmatrix}
    \in \mathbb{C}^{p_1p_2 \times q_1q_2},
\]
where $\boldsymbol{x}[:j]$, $0 \leq j < n_2-1$ is the $j-$th slice of $\boldsymbol{x}$ and 
\[
\mathcal{H}(\boldsymbol{x}[:,j])
    =
    \begin{bmatrix}
        \mathcal{H}(\boldsymbol{x}[0,j]) & \dots & \mathcal{H}(\boldsymbol{x}[q_1-1,j])\\ 
        \vdots & \ddots & \vdots \\
        \mathcal{H}(\boldsymbol{x}[p_1-1,j]) & \dots & \mathcal{H}(\boldsymbol{x}[n_1-1,j])
    \end{bmatrix},
\]
with \(p_1+q_1=n_1+1\), \(p_2+q_2=n_2+1\), and \(\mathcal{H}\boldsymbol{x}[:,j]    \in \mathbb{C}^{p_1 \times q_1}\).

An explicit formula for \(\mathcal{H}\boldsymbol{x}\) is given by:
\[
[\mathcal{H}\boldsymbol{x}]_{uv} = \boldsymbol{x}[i,j],
\]
where 
\begin{alignat*}{2}
    u&=u_1+u_2*p_1              , \quad &&v = v_1+v_2 \cdot (n_1-p_1+1) \\
    i&=u_1+v_1 , \quad                  &&j = u_2 + v_2.
\end{alignat*}

Moreover, \(\mathcal{H}\boldsymbol{x}\) admits a Vandermonde decomposition \(\mathcal{H}\boldsymbol{x}= \boldsymbol{E_L}\boldsymbol{D}\boldsymbol{E_R}^{\mathsf{T}}\), with the \(k\)-th columns of \(\boldsymbol{E_L}\) and \(\boldsymbol{E_R}\) given by:
\begin{align*}
    \boldsymbol{E_L}[:,k] & = \{  \boldsymbol{y}(f_{1k};~n_1)[i] \cdot \boldsymbol{z}(f_{2k};~n_2)[j] \; | \; 0\leq i\leq p_1-1 , 0\leq j\leq p_2-1  \} \\
    \boldsymbol{E_R}[:,k] & = \{  \boldsymbol{y}(f_{1k};~n_1)[i] \cdot \boldsymbol{z}(f_{2k};~n_2)[j] \; | \; 0\leq i\leq q_1-1 , 0\leq j\leq q_2-1   \},
\end{align*}
and \(\boldsymbol{D}= \text{diag}(b_1,\cdots,b_r)\). Consequently, \(\mathcal{H}\boldsymbol{x}\) remains a rank \(r\) matrix for two-dimensional arrays.

The LPPG algorithm is readily adaptable for high-dimensional SSS recovery problems, with efficient implementations for Hankel matrix-vector multiplications and the application of \(\mathcal{H}^*\).

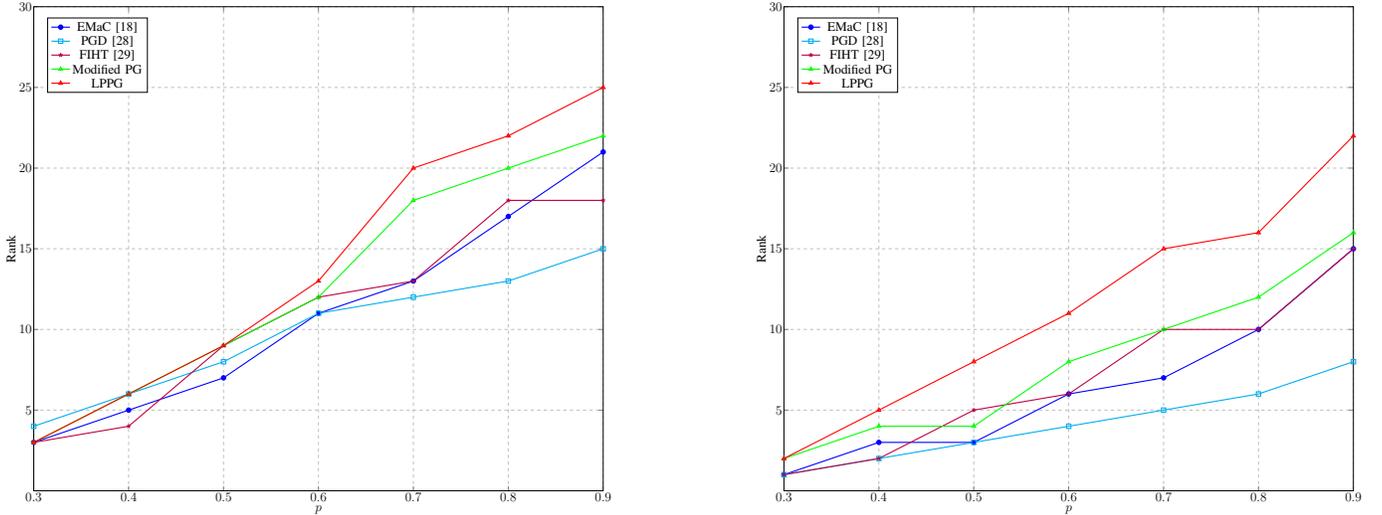
\begin{figure*}[htbp]
    \centering
    \subfigure{\resizebox{0.45\linewidth}{!}{{\scalefont{0.45}
\begin{tikzpicture}

\begin{axis}[
     width=1.1\linewidth,
    xlabel={$p$ },
    xlabel shift = 1 pt,
    ylabel={\text{Rank}},
    xmin=0.3, xmax=0.9,
    ymin=0, ymax=30,
    xtick=
    {0.3,0.4,0.5,0.6,0.7,0.8,0.9},
    ytick={5,10,15,20,25,30},
    legend style={at={(0.20,0.90)},anchor=east},
    ymajorgrids=true,
    xmajorgrids=true,
    grid style=dashed,
    font=\large
]

\addplot[
    color=blue,
    mark=*,
    line width=1.0pt,
    ]
  coordinates{(0.3,3)(0.4,5)(0.5,7)(0.6,11)(0.7,13)(0.8,17)(0.9,21)
    };
    \addlegendentry{EMaC\cite{chen2013spectral} }

\addplot[
    color=cyan,
    mark=square,
    line width=1.0pt,
    ]
     coordinates{(0.3,4)(0.4,6)(0.5,8)(0.6,11)(0.7,12)(0.8,13)(0.9,15)
    };
\addlegendentry{PGD\cite{cai2018spectral}}

\addplot[
    color=purple,
    mark=star,
    line width=1.0pt,
    ]
     coordinates {(0.3,3)(0.4,4)(0.5,9)(0.6,12)(0.7,13)(0.8,18)(0.9,18)
    };
\addlegendentry{FIHT\cite{cai2019fast}}

\addplot[
    color=green,
    mark=triangle,
    line width=1.0pt,
    ]
  coordinates {(0.3,3)(0.4,6)(0.5,9)(0.6,12)(0.7,18)(0.8,20)(0.9,22)
    };
    \addlegendentry{Modified PG}

\addplot[
    color=red,
    mark=triangle,
    line width=1.0pt,
    ]
   coordinates{(0.3,3)(0.4,6)(0.5,9)(0.6,13)(0.7,20)(0.8,22)(0.9,25)
    };
    \addlegendentry{LPPG }

\end{axis}
\end{tikzpicture}
}}} \hfill
    \subfigure{\resizebox{0.45\linewidth}{!}{{\scalefont{0.45}
\begin{tikzpicture}

\begin{axis}[
     width=1.1\linewidth,
    xlabel={$p$ },
    xlabel shift = 1 pt,
    ylabel={\text{Rank}},
    xmin=0.3, xmax=0.9,
    ymin=0, ymax=30,
    xtick=
    {0.3,0.4,0.5,0.6,0.7,0.8,0.9},
    ytick={5,10,15,20,25,30},
    legend style={at={(0.20,0.90)},anchor=east},
    ymajorgrids=true,
    xmajorgrids=true,
    grid style=dashed,
    font=\large
]

\addplot[
    color=blue,
    mark=*,
    line width=1.0pt,
    ]
  coordinates{(0.3,1)(0.4,3)(0.5,3)(0.6,6)(0.7,7)(0.8,10)(0.9,15)
    };
    \addlegendentry{EMaC\cite{chen2013spectral} }

\addplot[
    color=cyan,
    mark=square,
    line width=1.0pt,
    ]
     coordinates{(0.3,1)(0.4,2)(0.5,3)(0.6,4)(0.7,5)(0.8,6)(0.9,8)
    };
\addlegendentry{PGD\cite{cai2018spectral}}

\addplot[
    color=purple,
    mark=star,
    line width=1.0pt,
    ]
     coordinates {(0.3,1)(0.4,2)(0.5,5)(0.6,6)(0.7,10)(0.8,10)(0.9,15)
    };
\addlegendentry{FIHT\cite{cai2019fast}}

\addplot[
    color=green,
    mark=triangle,
    line width=1.0pt,
    ]
  coordinates {(0.3,2)(0.4,4)(0.5,4)(0.6,8)(0.7,10)(0.8,12)(0.9,16)
    };
    \addlegendentry{Modified PG}

\addplot[
    color=red,
    mark=triangle,
    line width=1.0pt,
    ]
   coordinates{(0.3,2)(0.4,5)(0.5,8)(0.6,11)(0.7,15)(0.8,16)(0.9,22)
    };
    \addlegendentry{LPPG }

\end{axis}
\end{tikzpicture}
}}}
    \caption{$50\%$ phase transition curves: $n_1=63, (p_1,q_1)=(32,32)$. Left: no damping in the test signals. Right: signals are generated with damping. }
    \label{fig: phase transition}
\end{figure*}

\section{Simulations} \label{sect: numerical}

This section presents numerical simulations to showcase the advantages of our LPPG method, with a specific focus on its recovery ability. The simulations are structured as follows:

\begin{enumerate}
    \item \textbf{Phase Transition Analysis:} In Section \ref*{Subsect: phase transition}, we evaluate the recovery ability of LPPG through a phase transition framework.
    
    \item \textbf{Convergence Rate without Noise:} Section \ref{Subsect: convergence rate} illustrates the fast convergence rate of LPPG facilitated by a large step-size in noise-free scenarios.
    
    \item \textbf{Performance under Heavy Noise:} In Section \ref{subsec: noise}, we address the challenging case with heavy additive noise and demonstrate the flexibility of the hyperparameter \(\beta\), which can be chosen within a wide range.
    
    \item \textbf{Weighted Norm Issue:} Section \ref{subsect: NS formulation} shows how our formulation addresses the issue of weighted norm caused by repetitive elements in matrix form.
\end{enumerate}

In all tests, the ground-truth SSS and their partial observations are generated as follows: The frequency \(f_k\) of SSS is uniformly distributed on \([0,1)\), and the phases of complex coefficients \(b_k\) are uniformly sampled on \([0,2\pi)\). The amplitudes are set to \(1 + 10^{0.5c_k}\), where \(c_k\) is uniformly distributed on \([0,1]\); cf. \cite{cai2018spectral}. For a given sample size \(m = S_pn\), \(\Omega\) is assumed to be uniformly sampled. We assess computational efficiency (number of iterations) and normalized mean squared error (NMSE, defined as \(\|\boldsymbol{x}^{\dag} - \boldsymbol{x}\|_2/\|\boldsymbol x\|_2\), where \(\boldsymbol{x}^{\dag}\) denotes the estimate returned by LPPG) to highlight the advantages of our LPPG method. We consider level-1, level-2, and level-3 cases in the data settings, with two different settings: (a) without damping factor, and (b) damping factors generated such that \(1/\tau_{1,k}\) is uniformly sampled from \([8,16]\) for \(1\leq k \leq r\) (similarly, \(1/\tau_{2,k}\) from \([16,32]\) and \(1/\tau_{3,k}\) from \([64,128]\))\cite{cai2018spectral}.

Our LPPG method is compared with benchmark algorithms, including EMaC\cite{chen2013spectral}, FIHT \cite{cai2019fast}, PGD \cite{cai2018spectral}, standard PG \cite{parikh2014proximal}, and the modified PG algorithm (MPG), which is LPPG without subspace projection. For large-size cases, we apply EMaC\cite{chen2013spectral} using the alternating projection-based implementation as suggested in \cite{chen2013spectral}. The value of \(\alpha\) is set to \(1e-20\) by default.

\subsection{Empirical Phase Transition} \label{Subsect: phase transition}

In our analysis, an algorithm is deemed to have successfully recovered a signal if the Normalized Mean Squared Error (NMSE) is less than \(10^{-3}\). We illustrate the empirical phase transition in Figure \ref{fig: phase transition}. The displayed results are averaged over \(50\) trials, and the phase transition curves represent the \(50\%\) success rate as a function of the sampling ratio \(S_p\) and rank \(r\). These curves are plotted for both undamped and damped cases. Due to computational constraints, we select \(n=63\) to effectively demonstrate the advantages of our approach. The stopping criteria are set to a maximum of \(1000\) iterations (denoted as \(K\)) and a numerical-error threshold \(\epsilon\) of \(10^{-6}\). Notably, the standard PG algorithm is excluded from the phase transition curves due to its inability to converge within 1000 iterations.

Comparatively, the damped case presents more challenges than the undamped case, necessitating a higher sampling ratio for successful recovery. Our results indicate that both the LPPG and MPG methods surpass other algorithms in terms of performance in both scenarios. Specifically, the LPPG method, especially in damped cases, showcases the necessity of the subspace projection technique when the model order is moderate relative to the signal size. This technique is critical for the success of the LPPG algorithm, as it enables further reduction of the objective function in each iteration. Moreover, the LPPG method can achieve successful recovery with a smaller sampling ratio for a fixed \(r\), an important consideration in practical applications where the sensing process is costly. These findings underscore the superiority of the LPPG method in both undamped and damped scenarios.

\begin{remark}
Given the superior recovery ability of LPPG, it is feasible to select scenarios where only LPPG is effective in subsequent tests. However, to ensure fairness, we avoid such biased comparisons and choose scenarios where all algorithms can operate, particularly in Sections \ref{Subsect: convergence rate} and \ref{subsec: noise}.
\end{remark}

\begin{figure*}[htbp]
    \centering
    \subfigure{\resizebox{0.45\linewidth}{!}{{\scalefont{0.45}
\begin{tikzpicture}

\begin{semilogyaxis}[
     width=1.1\linewidth,
    xlabel={\(\#\)Iterations },
    xlabel shift = 1 pt,
    ylabel={\text{NMSE}},
    xmin=0, xmax=100,
    ymin=1e-6, ymax=1,
    xtick=
    {0,10,20,30,40,50,60,70,80,90,100},
    ytick={1e-6,1e-5,1e-4,1e-3,1e-2,1e-1,1.0},
    legend style={at={(0.03,0.03)}, anchor=south west}, 
    ymajorgrids=true,
    xmajorgrids=true,
    grid style=dashed,
    font=\large
]

\addplot[
    color=blue,
    mark=*,
    line width=1.0pt,
    ]
  coordinates{(0,0.812)(10,0.149)(20,0.0231)(30,0.0071)(40,0.0016)(50,0.000413)(60,0.0001)(70,2.66e-5)(80,6.88e-6)(90,1.79e-6)(100,4.73e-7)
    };
    \addlegendentry{EMaC\cite{chen2013spectral} }

\addplot[
    color=cyan,
    mark=square,
    line width=1.0pt,
    ]
    coordinates{(0,0.812)(10,0.304)(20,0.183)(30,0.0991)(40,0.037)(50,0.00323)(60,0.00162)(70,0.00126)(80,0.00020)(90,5.03e-5)(100,1.78e-5)
    };   
\addlegendentry{PGD\cite{cai2018spectral}}

\addplot[
    color=purple,
    mark=star,
    line width=1.0pt,
    ]
    coordinates{(0,0.812)(10,0.035)(20,0.011)(30,0.00066)(40,0.00021)(50,7.07e-5)(60,2.47e-5)(70,8.7e-6)(80,3.0e-6)(90,1.09e-6)(100,3.0e-7)
    }; 
\addlegendentry{FIHT\cite{cai2019fast}}

\addplot[
    color=red,
    mark=triangle,
    line width=1.0pt,
    ]
    coordinates{(0,0.812)(10,0.654)(20,0.654)(30,0.6300)(40,0.6300)(50,0.63)(60,0.63)(70,0.63)(80,0.63)(90,0.63)(100,0.63)
    }; 
    \addlegendentry{standard PG\cite{parikh2014proximal} }

\addplot[
    color=green,
    mark=triangle,
    line width=1.0pt,
    ]
    coordinates{(0,0.812)(10,0.122)(20,0.0191)(30,0.0061)(40,0.0014)(50,0.000360)(60,9.09e-5)(70,2.32e-5)(80,6.03e-6)(90,1.57e-6)(100,4.0e-7)
    }; 
    \addlegendentry{Modified PG}

\addplot[
    color=red,
    mark=triangle,
    line width=1.0pt,
    ]
    coordinates{(0,0.812)(10,0.036)(20,0.0050)(30,0.0006)(40,9.73e-5)(50,1.62e-5)(60,2.76e-6)(70,4.74e-7)
    }; 
    \addlegendentry{LPPG }

\end{semilogyaxis}
\end{tikzpicture}
}}\label{fig: convergencerate_undamped}} \hfill
    \subfigure{\resizebox{0.45\linewidth}{!}{{\scalefont{0.45}
\begin{tikzpicture}

\begin{semilogyaxis}[
     width=1.1\linewidth,
    xlabel={\(\#\)Iterations },
    xlabel shift = 1 pt,
    ylabel={\text{NMSE}},
    xmin=0, xmax=100,
    ymin=1e-6, ymax=1,
    xtick=
    {0,10,20,30,40,50,60,70,80,90,100},
    ytick={1e-6,1e-5,1e-4,1e-3,1e-2,1e-1,1.0},
    legend style={at={(0.03,0.03)}, anchor=south west}, 
    ymajorgrids=true,
    xmajorgrids=true,
    grid style=dashed,
    font=\large
]

\addplot[
    color=blue,
    mark=*,
    line width=1.0pt,
    ]
    coordinates{(0,0.751)(10,0.148)(20,0.0219)(30,0.00395)(40,0.000756)(50,0.000149)(60,2.96e-5)(70,3.52e-6)(80,1.22e-6)(90,2.63e-7)
    }; 
    \addlegendentry{EMaC\cite{chen2013spectral} }

\addplot[
    color=cyan,
    mark=square,
    line width=1.0pt,
    ]
    coordinates{(0,0.751)(10,0.445)(20,0.371)(30,0.331)(40,0.321)(50,0.300)(60,0.284)(70,0.264)(80,0.246)(90,0.255)(100,0.245)
    };  
\addlegendentry{PGD\cite{cai2018spectral}}

\addplot[
    color=purple,
    mark=star,
    line width=1.0pt,
    ]
    coordinates{(0,0.751)(10,0.018)(20,0.00169)(30,0.000260)(40,6.28e-5)(50,2.23e-5)(60,8.85e-6)(70,3.52e-6)(80,1.38e-6)(90,5.5e-7)
    }; 
\addlegendentry{FIHT\cite{cai2019fast}}

\addplot[
    color=red,
    mark=triangle,
    line width=1.0pt,
    ]
    coordinates{(0,0.751)(10,0.544)(20,0.544)(30,0.544)(40,0.544)(50,0.544)(60,0.544)(70,0.544)(80,0.544)(90,0.544)(100,0.544)
    }; 
    \addlegendentry{standard PG\cite{parikh2014proximal} }

\addplot[
    color=green,
    mark=triangle,
    line width=1.0pt,
    ]
    coordinates{(0,0.751)(10,0.118)(20,0.018)(30,0.00347)(40,0.000674)(50,0.000134)(60,2.70e-5)(70,5.52e-6)(80,1.12e-6)(90,2.33e-7)
    }; 
    \addlegendentry{Modified PG}

\addplot[
    color=red,
    mark=triangle,
    line width=1.0pt,
    ]
    coordinates{(0,0.751)(10,0.020)(20,0.0021)(30,0.000246)(40,2.87e-5)(50,3.38e-6)(60,3.98e-7)
    }; 
    \addlegendentry{LPPG }

\end{semilogyaxis}
\end{tikzpicture}
}}\label{fig: convergencerate_damped}}
    \caption{Convergence rate comparison: $(n_1,n_2)=(31,31), (p_1p_2,q_1q_2)=(256,256),r=15$. Left: no damping in the test signals and $S_p=0.3$. Right: signals are generated with damping and $S_p=0.4$. }
    \label{convergencerate}
\end{figure*}

 \subsection{Fast Convergence Rate Under Noiseless Case}\label{Subsect: convergence rate}

In this experiment, we assess the convergence rate of our LPPG algorithm in an ideal, noiseless environment. The convergence rate is influenced by the parameter \(\beta\), as indicated by the inequality:
\[
\min_{i=0,\cdots,K} ~  \|\partial F(\boldsymbol{H}_{i+1})  \|_F^2 \leq \frac{(\beta-L_f)\beta^4}{(K+1)(\alpha+\beta)^2} ( F(\boldsymbol{H}_{0}) - F^*),
\]
implying that smaller values of \(\beta\) lead to a faster convergence rate. In line with \cite{davenport2016overview}, it is reasonable to set the structural constraint parameter close to zero when the observed data is accurate. Therefore, for the modified PG-based algorithms (MPG and LPPG), we set the step size \(\gamma=1/\beta=1e6\), ensuring convergence irrespective of the value of \(\beta\) since \(L_f < \beta\). Conversely, gradient-based methods like FIHT and PGD can only empirically adjust their step-size to the sampling ratio, as their step size is not directly related to the noise level, requiring a time-consuming tuning process. The alternating projection-based EMaC algorithm cannot increase the step size.

We compare the convergence rates of different algorithms in both undamped and damped settings in Figures \ref{fig: convergencerate_undamped} and \ref{fig: convergencerate_damped}, respectively, by plotting NMSE as a function of the number of iterations. The test involves a two-dimensional spectrally sparse signal \(\boldsymbol{s}\) of size \(\boldsymbol{n}=31 \times 31\), with \(\mathcal{H}\boldsymbol{s} \in \mathbb{C}^{256 \times 256}\) being a level-2 block Hankel matrix. The sampling ratios are set to \(0.3\) and \(0.4\) for undamped and damped settings, respectively, and the rank is fixed at \(15\). This setting ensures that all algorithms are operational while allowing for differentiation in performance. The results are averaged over \(50\) trials, with stopping criteria set at \(K=1000\) iterations and \(\epsilon =10^{-6}\).

In this scenario, with moderate rank and sampling ratio, the LPPG algorithm demonstrates the fastest convergence in both undamped and damped cases. The MPG also shows a quicker convergence than PGD and EMaC, highlighting the benefit of the modified PG step. However, the standard PG algorithm converges slowly due to the small step size dictated by \(L_f\), underscoring the challenge of applying standard PG to spectral compressed sensing with equally weighted norms. Our results confirm that both the subspace projection and the modified PG step are pivotal for achieving faster convergence rates. Although FIHT's convergence speed is competitive, the final result values may become unstable as it does not guarantee a reduction in its objective function in each iteration.

\subsection{Robustness to Additive Noise}\label{subsec: noise}

In practical applications, obtaining noise-free measurements is often unfeasible, making the robustness of SSS recovery algorithms to additive noise a critical factor. To assess the performance of our LPPG approach under such conditions, we introduce noise into the measurements using the following noise vector:
\[
\boldsymbol{e} = \eta \cdot \| \mathcal{P}_{\Omega}\boldsymbol{s} \| \cdot \frac{\boldsymbol{w}}{\|\boldsymbol{w}\|},
\]
where \(\boldsymbol{w}\) is a standard complex Gaussian random vector, and \(\eta\) represents the noise level. We examine a challenging scenario with \(\eta=1\), which corresponds to a Signal-to-Noise Ratio (SNR) of 0 dB.

In this experiment, we consider a three-dimensional spectrally sparse signal \(\boldsymbol{s}\) of size \(\boldsymbol{n}=15 \times 15 \times 15\), with \(\mathcal{H}\boldsymbol{s} \in \mathbb{C}^{512 \times 512}\) forming a level-3 block Hankel matrix. We set the sampling ratio to \(1.0\) in both the undamped and damped settings to focus on the denoising capability. The rank is fixed at \(10\). As in the previous section, the results are averaged over \(50\) trials. This setting is chosen to provide a fair comparison and demonstrate the utility of a large Hankel enforcement parameter \(\beta\). 

All tested algorithms terminate when either of the following conditions is met: the relative change in the Hankel matrix \(\|\boldsymbol{H}_{k+1}-\boldsymbol{H}_{k}\|_F / \|\boldsymbol{H}_{k}\|_F\) falls below \(10^{-3}\), or the maximum iteration count (1000) is reached. This stopping criterion is chosen to be relatively larger due to the high noise level and the diminishing returns in NMSE improvement with an increased number of iterations. The results are presented in Table \ref{table: noise robust}.

First of all, the EMaC algorithm, being a constrained optimization implemented via the alternating projection method, cannot vary the Hankel enforcement parameter \(\beta\), serving as a baseline for recovery accuracy. In contrast, to enhance reconstruction accuracy under noisy conditions, the LPPG method and other algorithms allow for a larger \(\beta\) value compared to noise-free scenarios. For this experiment, we select \(\beta\) values from the set \(\{1*\beta^*, 10\beta^*, 100\beta^*\}\)\footnote{Here the $\beta^*$ is the default value for each algorithm. Moreover $\beta^*=S_p$ for Enhanced matrix form (PGD and FIHT), $\beta^*=S_p*n_1n_2n_3/(p_1p_2p_3*q_1q_2q_3)$ for vector form (PG-based algorithms)  as a fair comparison.}, with the understanding that a larger \(\beta\) implies a smaller step size for the standard PG algorithm, potentially leading to its early termination. This setup not only tests the algorithms' robustness to noise but also demonstrates the impact of the parameter \(\beta\) on their performance.

\begin{table}[ht]
\caption{
    NMSE, number of iterations and time comparisons under heavy noise $SNR=0dB, (n_1,n_2,n_3)=(15,15,15), (p_1p_2p_3,q_1q_2q_3)=(512,512),r=10, S_p=1.0$.
} 
\label{table: noise robust}
\resizebox{\linewidth}{!}{%
\begin{tabular}{c|cc|cc|cc}
\hline
$\beta$                             & \multicolumn{2}{c}{$\beta=1*\beta^*$  }       &   \multicolumn{2}{c}{$\beta=10\beta^*$  }  & \multicolumn{2}{c}{$\beta=100\beta^*$  }                        \\ 
\hline
Criterion         & NMSE         & Iter             & NMSE          & Iter         & NMSE            & Iter                          \\ 
\hline
    &  \multicolumn{6}{c}{ undamped case}\\
\hline 
{\color{blue}EMaC\cite{chen2013spectral}  } & {\color{blue}0.162}      & {\color{blue}12}     & {\color{blue}0.162}        & {\color{blue}12}       & {\color{blue}0.162}      & {\color{blue}12}              \\
PGD\cite{cai2018spectral}   & 0.189      & 10    & 0.190        & 21    & 0.190    & 22                         \\
FIHT\cite{cai2019fast}   & 0.178       & 14    & 0.160          & 9       & 0.162    & 12        \\
standard PG\cite{parikh2014proximal}   & 0.650       & 1     & 0.306       & 1      & 0.185        & 1              \\
Modified PG  & 0.646      & 7   &  0.282        & 18   & 0.140     & 37         \\
{\color{red} LPPG}     & {\color{red}  0.646  }  & {\color{red}    6} & {\color{red}0.282}  & {\color{red}  12  } & {\color{red}  0.136  } & {\color{red}31}  \\
\hline
  & \multicolumn{6}{c}{ damped case}\\
\hline 
{\color{blue}EMaC\cite{chen2013spectral}  }  & {\color{blue}0.171}       & {\color{blue}14}       & {\color{blue}0.171}         & {\color{blue}14}       & {\color{blue}0.171}         &{\color{blue}14}                  \\
PGD\cite{cai2018spectral}   & 0.182      & 10   & 0.167        & 40    & 0.168    & 43                         \\
FIHT\cite{cai2019fast}   & 0.182      & 14    & 0.169          & 14       & 0.171    & 12       \\
standard PG\cite{parikh2014proximal}   & 0.651      & 1    & 0.308        & 1     & 0.189        & 1              \\
Modified PG  &0.646       & 7   &  0.283        & 23     & 0.143      & 43       \\
{\color{red} LPPG}     & {\color{red}  0.646  }  & {\color{red}    7} & {\color{red}0.283}  & {\color{red}  20 } & {\color{red}  0.141  } & {\color{red}36}  \\
\hline
\end{tabular}
}
\end{table}

The findings presented in Table \ref{table: noise robust} allow us to draw several key observations regarding the recovery accuracy under heavy additive noise:

\begin{enumerate}
    \item \textit{Improved Recovery with Larger \(\beta\):} It becomes clear that using larger Hankel enforcement parameters (\(\beta\)) generally leads to improved recovery accuracy. This trend is consistent across different algorithms, underscoring the significance of adjusting \(\beta\) in noisy environments.

    \item \textit{Superiority of MPG and LPPG at Higher \(\beta\):} Particularly when \(\beta = 100\), both MPG and LPPG demonstrate superior performance compared to other algorithms, with LPPG achieving the highest level of accuracy. Remarkably, even the standard PG with a single iteration competes effectively with other algorithms under these conditions. Conversely, FIHT and PGD only surpass the performance of EMaC when \(\beta\) is increased, highlighting the critical role of the Hankel constraint in algorithmic performance.

    \item \textit{Effectiveness of Stopping Criteria:} As discussed in Section \ref{subsect: stopping criteria}, the term \(\beta^2 \mathcal{H}\mathcal{L}\mathcal{H}^*(\boldsymbol{H}_{k,\frac{1}{2}}- \boldsymbol{H}_{k+1})  \in \partial F(\boldsymbol{H}_{k+1})\) is relevant for our PG-based algorithms. The stopping criteria, therefore, serve as a reliable standard for halting iterative algorithms. This aspect is particularly crucial for algorithms like PGD and FIHT, which may otherwise oscillate around the optimal solution.
\end{enumerate}

In summary, these results collectively indicate that our LPPG method stands out as a robust and efficient algorithm for recovering SSS in scenarios characterized by heavy additive noise. This robustness, coupled with the method's efficiency, underscores its potential utility in practical applications where noise is an unavoidable factor.

\subsection{The Benefit of Formulation}\label{subsect: NS formulation}

A key contribution of our work is making Proximal Gradient (PG) algorithms practical for vector formulations through the modified PG. This section showcases the advantages of our formulations in vector form, which align with the original problem setting and enhance reconstruction accuracy as signal size increases.

Previous studies \cite{gillard2010cadzow,wang2021fast} have observed that Cadzow's basic algorithm may lead to biased reconstructions due to repetitive elements in the enhanced matrix. To demonstrate this, we present the Normalized Mean Squared Error (NMSE) plots for the FIHT, PGD, EMaC, and LPPG algorithms (MPG and standard PG share the same data fidelity item as LPPG) in 1D SSS recovery for signal sizes \(n \in \{17, 33, 65, 129 \}\). As signal size increases, the impact of repetitive elements becomes more pronounced. The results, averaged over 50 trials, consider fully observed signals with noise and a rank of 2. The stopping criteria are set at \(K=1000\) and \(\epsilon =10^{-6}\).

In addition to presenting NMSE as a direct metric for recovery ability across different signal sizes, we also include a bar chart depicting the Relative Mean Squared Error (RMSE) related to LPPG (defined as \( \|\text{NMSE}_{\text{LPPG}} -\text{NMSE}\|_2/\| \text{NMSE}_{\text{LPPG}} \|_2\), where \(\text{NMSE}_{\text{LPPG}}\) denotes the NMSE returned by LPPG).

\begin{figure*}[htbp]
    \label{fig: formulation}
    \centering
    \subfigure{\resizebox{0.45\linewidth}{!}{\begin{tikzpicture}
\begin{axis}[
    ybar,
    ylabel={NMSE},
    ymin=1e-2, ymax=1,
    symbolic x coords={17, 33, 65, 129},
    xtick=data,
    ymin=0,
   legend style={at={(0.03,0.99)}, anchor=north west, font=\tiny},
    axis y line*=left,
]

\addplot [sharp plot, color=blue,
    mark=*,dashed,
    mark size=1pt] coordinates {(17,0.684) (33,0.412) (65,0.306) (129,0.202)};
\addlegendentry{EMaC\cite{chen2013spectral}}

\addplot [sharp plot, color=cyan,
   dashed,  mark=square, mark size=1pt, ] coordinates {(17,0.675) (33,0.400) (65,0.297) (129,0.201)};
\addlegendentry{PGD\cite{cai2018spectral}}

\addplot [sharp plot, color=purple,
    mark=star,dashed,
    mark size=1pt] coordinates {(17,0.674) (33,0.400) (65,0.297) (129,0.201)};
\addlegendentry{FIHT\cite{cai2019fast}}

\addplot [sharp plot, color=red,
    mark=triangle,mark size=1pt, thick] coordinates {(17,0.636) (33,0.358) (65,0.244) (129,0.168)};
\addlegendentry{LPPG}

\end{axis}

\begin{axis}[
    ybar,
    ylabel={RMSE},
    ymin=1e-2, ymax=1,
    symbolic x coords={17, 33, 65, 129},
    xtick=data,
    ymin=0, ymax=1.0,
    legend style={at={(0.98,0.98)},anchor=north east, font=\tiny},
    axis y line*=right,
]
\addplot[fill=none,draw=blue] coordinates {(17,0.068) (33,0.218) (65,0.239) (129,0.194)};
\addlegendentry{LPPG-EMaC}

\addplot[fill=none,draw=cyan] coordinates {(17,0.063) (33,0.193) (65,0.215) (129,0.192)};
\addlegendentry{LPPG-PGD}

\addplot[fill=none,draw=purple] coordinates {(17,0.063) (33,0.195) (65,0.214) (129,0.192)};
\addlegendentry{LPPG-FIHT}

\end{axis}

\end{tikzpicture}}\label{fig: formulation undamped}} \hfill
    \subfigure{\resizebox{0.45\linewidth}{!}{\begin{tikzpicture}
\begin{axis}[
    ybar,
    ylabel={NMSE},
    ymin=1e-2, ymax=1,
    symbolic x coords={17, 33, 65, 129},
    xtick=data,
    ymin=0,
   legend style={at={(0.03,0.99)}, anchor=north west, font=\tiny},
    axis y line*=left,
]


\addplot [sharp plot, color=blue,
    mark=*,dashed,
    mark size=1pt] coordinates {(17,0.659) (33,0.493) (65,0.427) (129,0.285)};
\addlegendentry{EMaC\cite{chen2013spectral}}

\addplot [sharp plot, color=cyan,
   dashed,  mark=square, mark size=1pt, ] coordinates {(17,0.664) (33,0.514) (65,0.442) (129,0.294)};
\addlegendentry{PGD\cite{cai2018spectral}}
\addplot [sharp plot, color=purple,
    mark=star,dashed,
    mark size=1pt] coordinates {(17,0.664) (33,0.515) (65,0.445) (129,0.293)};
\addlegendentry{FIHT\cite{cai2019fast}}

\addplot [sharp plot, color=red,
    mark=triangle,mark size=1pt, thick] coordinates {(17,0.590) (33,0.406) (65,0.305) (129,0.179)};
\addlegendentry{LPPG}

\end{axis}

\begin{axis}[
    ybar,
    ylabel={RMSE},
    ymin=1e-2, ymax=1,
    symbolic x coords={17, 33, 65, 129},
    xtick=data,
    ymin=0, ymax=1.0,
    legend style={at={(0.98,0.98)},anchor=north east, font=\tiny},
    axis y line*=right,
]
\addplot[fill=none,draw=blue] coordinates {(17,0.118) (33,0.228) (65,0.397) (129,0.625)};
\addlegendentry{LPPG-EMaC}

\addplot[fill=none,draw=cyan] coordinates {(17,0.125) (33,0.266) (65,0.446) (129,0.634)};
\addlegendentry{LPPG-PGD}
\addplot[fill=none,draw=purple] coordinates {(17,0.125) (33,0.267) (65,0.459) (129,0.634)};
\addlegendentry{LPPG-FIHT}

\end{axis}

\end{tikzpicture}}\label{fig: formulation damped}}
    \caption{Convergence rate comparison: $n_1={17,33,65,129}, (p_1,q_1)={(9,9),(17,17),(33,33),(65,65)},r=2,S_p=1.0,$SNR$ =0$dB. Left: no damping in the test signals. Right: signals are generated with damping. }
    \label{figure: formulation}
\end{figure*}
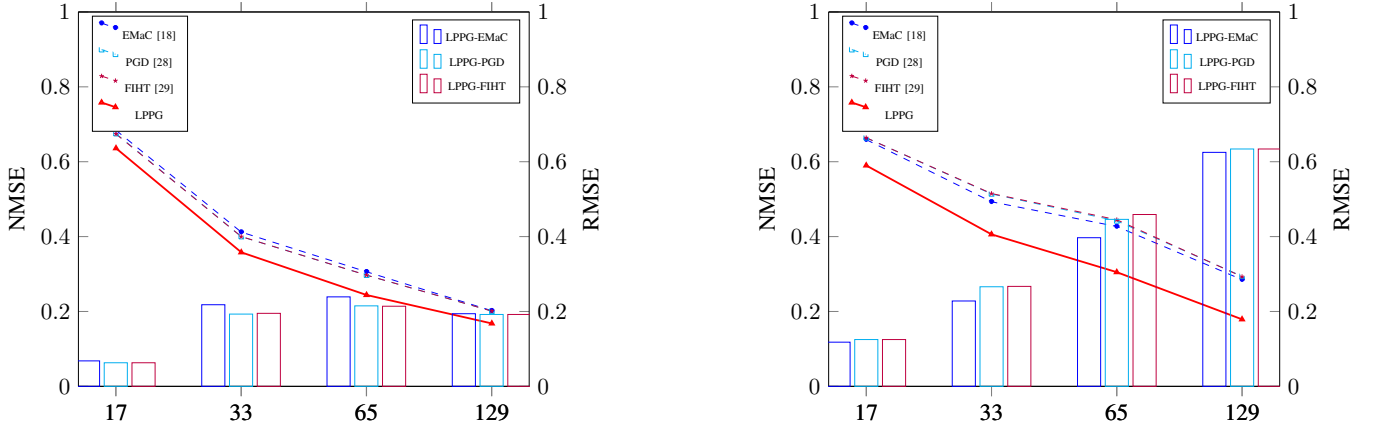

Figure \ref{figure: formulation} reveals two key observations. Firstly, all algorithms show improved reconstruction accuracy as signal size increases, likely due to the signal's increased sparsity in the frequency domain. Secondly, the LPPG algorithm consistently outperforms other algorithms in all cases, with RMSE relative to LPPG growing as signal size increases. This trend is attributable to the more evident repetitive elements in the enhanced matrix at larger signal sizes. However, the LPPG algorithm remains unaffected by this issue due to its reliance on the Hankel structure. This advantage is particularly noticeable in the damped case.

\section{Conclusions} \label{sect: conclusion}
In this paper, we introduced the Low-rank Projected Proximal Gradient (LPPG) method, a novel approach for the recovery of spectrally sparse signals. This method is based on a non-convex optimization formulation involving a low-rank Hankel matrix.

The core innovation of the LPPG method lies in its iterative process, which comprises two distinct steps in each iteration. The first step involves optimization with respect to all variables, excluding the column and row sub-spaces of the Hankel matrix. The second step simultaneously updates these sub-spaces along with all other variables. Each step is meticulously tailored to leverage the inherent structures of the problem, optimizing the recovery process.

A significant advantage of the LPPG method is its guaranteed convergence, a feature that not only enhances its theoretical appeal but also bolsters its practical utility. Through numerical simulations, we have demonstrated that LPPG exhibits a faster convergence rate and more accurate reconstruction compared to several established benchmark algorithms. This attribute renders it particularly advantageous for handling large-scale problems with heavy noise, a common challenge in practical applications.

Furthermore, our modifications to the Proximal Gradient (PG) method have opened new avenues for its application in vector-form compressed sensing. Looking ahead, we are excited about the prospect of integrating other accelerated PG techniques into this framework, potentially unlocking even more efficient and powerful solutions for spectrally sparse signal recovery.

\section{Proofs} \label{sect: proofs}
In this section, we provide mathematical proofs for the theoretical results stated in Proposition \ref{prop:Lipschitz-Constant}, Proposition \ref{pro: gradient_h}, Proposition \ref{Theo_SP_reduction}, and Theorem \ref{theo: convergence}. 

We begin by proving the following lemma, which offers a useful formula for computing the gradient of a function minimized with respect to its arguments.

\begin{lemma} \label{lemma: innergradient}
Let \( h:\mathbb{C}^{p\times q}\times\mathbb{C}^n\rightarrow\mathbb{R} \) be a continuously differentiable function, and let \( f:\mathbb{C}^{p \times q}\rightarrow\mathbb{R} \) be defined as \( f(\boldsymbol{H})=\min_{\boldsymbol{x}\in\mathbb{C}^n} h(\boldsymbol{H},\boldsymbol{x}) \). Assume that for each \( \boldsymbol{H}\in\mathbb{C}^{p \times q} \), there exists a minimizer \( \boldsymbol{x}^{\star}(\boldsymbol{H}) \).

Then, for any \( \boldsymbol{H}\in\mathbb{C}^{p\times q} \), the gradient of \( f \) with respect to \( \boldsymbol{H} \) exists and is given by:
\[
\nabla f(\boldsymbol{H})=\nabla_{\boldsymbol{H}} h(\boldsymbol{H},\boldsymbol{x}^{\star}),
\]
\end{lemma}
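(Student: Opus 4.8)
The plan is to prove this as an instance of the envelope theorem, whose engine is the first-order optimality condition at the inner minimizer. The first step is to record that, because \(\boldsymbol{x}^{\star}(\boldsymbol{H})\) minimizes the differentiable map \(\boldsymbol{x}\mapsto h(\boldsymbol{H},\boldsymbol{x})\) over all of \(\mathbb{C}^n\) (an unconstrained problem), the stationarity condition \(\nabla_{\boldsymbol{x}} h(\boldsymbol{H},\boldsymbol{x}^{\star}(\boldsymbol{H}))=0\) must hold. This identity is precisely what will make the ``cross term'' vanish, and it is the crux of the whole argument.

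Next I would establish differentiability of \(f\) by a two-sided sandwich on the increment \(f(\boldsymbol{H}+\boldsymbol{\Delta})-f(\boldsymbol{H})\), writing \(\boldsymbol{G}:=\nabla_{\boldsymbol{H}} h(\boldsymbol{H},\boldsymbol{x}^{\star}(\boldsymbol{H}))\). For the upper bound, I would use that \(\boldsymbol{x}^{\star}(\boldsymbol{H})\) is merely one candidate (hence suboptimal) for the perturbed problem, so \(f(\boldsymbol{H}+\boldsymbol{\Delta}) \le h(\boldsymbol{H}+\boldsymbol{\Delta},\boldsymbol{x}^{\star}(\boldsymbol{H}))\), and then Taylor-expand \(h\) in its first argument at the fixed point \(\boldsymbol{x}=\boldsymbol{x}^{\star}(\boldsymbol{H})\) to get \(f(\boldsymbol{H}+\boldsymbol{\Delta})-f(\boldsymbol{H}) \le \langle \boldsymbol{G},\boldsymbol{\Delta}\rangle + o(\|\boldsymbol{\Delta}\|_F)\). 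For the matching lower bound, I would symmetrically use optimality of \(\boldsymbol{x}^{\star}(\boldsymbol{H})\) at the base point, \(f(\boldsymbol{H}) \le h(\boldsymbol{H},\boldsymbol{x}^{\star}(\boldsymbol{H}+\boldsymbol{\Delta}))\), which yields \(f(\boldsymbol{H}+\boldsymbol{\Delta})-f(\boldsymbol{H}) \ge h(\boldsymbol{H}+\boldsymbol{\Delta},\boldsymbol{x}^{\star}(\boldsymbol{H}+\boldsymbol{\Delta}))-h(\boldsymbol{H},\boldsymbol{x}^{\star}(\boldsymbol{H}+\boldsymbol{\Delta}))\), and again expand in the first argument, now at the shifted point \(\boldsymbol{x}^{\star}(\boldsymbol{H}+\boldsymbol{\Delta})\). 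Combining the two bounds gives \(f(\boldsymbol{H}+\boldsymbol{\Delta})-f(\boldsymbol{H}) = \langle\boldsymbol{G},\boldsymbol{\Delta}\rangle + o(\|\boldsymbol{\Delta}\|_F)\), which is exactly differentiability of \(f\) with \(\nabla f(\boldsymbol{H})=\boldsymbol{G}\).

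The main obstacle is controlling the lower-bound remainder: there the gradient appears evaluated at \(\boldsymbol{x}^{\star}(\boldsymbol{H}+\boldsymbol{\Delta})\) rather than \(\boldsymbol{x}^{\star}(\boldsymbol{H})\), so to close the argument I need \(\nabla_{\boldsymbol{H}} h(\boldsymbol{H},\boldsymbol{x}^{\star}(\boldsymbol{H}+\boldsymbol{\Delta})) \to \boldsymbol{G}\) as \(\boldsymbol{\Delta}\to 0\). This is the only place the proof uses more than bare existence of a minimizer: it requires continuity of the argmin map \(\boldsymbol{H}\mapsto\boldsymbol{x}^{\star}(\boldsymbol{H})\) together with continuity of \(\nabla_{\boldsymbol{H}} h\). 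In the setting of this paper this is free, since the inner objective \eqref{eq:f-x} is a strongly convex quadratic in \(\boldsymbol{x}\) (its Hessian \(\alpha\mathcal{I}+\mathcal{P}_{\Omega}^*\mathcal{P}_{\Omega}+\beta\mathcal{H}^*\mathcal{H}\) is positive definite), so the minimizer is unique and in fact an affine function of \(\boldsymbol{H}\), namely \(\boldsymbol{x}^{\star}=\mathcal{L}(\mathcal{P}_{\Omega}^*\boldsymbol{s}+\beta\mathcal{H}^*\boldsymbol{H})\) as in \eqref{eq:xoptimal}, hence Lipschitz continuous.

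Finally I would note the slicker alternative that becomes available precisely because of this smoothness: once \(\boldsymbol{x}^{\star}(\cdot)\) is known to be differentiable (by the implicit function theorem applied to \(\nabla_{\boldsymbol{x}} h=0\), using invertibility of \(\nabla^2_{\boldsymbol{x}} h\)), the chain rule gives \(\nabla f(\boldsymbol{H}) = \nabla_{\boldsymbol{H}} h(\boldsymbol{H},\boldsymbol{x}^{\star}) + (D\boldsymbol{x}^{\star})^{*}\,\nabla_{\boldsymbol{x}} h(\boldsymbol{H},\boldsymbol{x}^{\star})\), and the second term vanishes by the first-order condition, recovering the claim in a single line. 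I would present the sandwich argument as the primary proof because it demands the least regularity, and mention the chain-rule version as a remark that specializes cleanly to the objective actually used in the paper.
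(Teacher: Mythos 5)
Your proof is correct, but it takes a genuinely different route from the paper's. The paper proves the lemma in three lines by the chain rule: it differentiates \(f(\boldsymbol{H}) = h(\boldsymbol{H},\boldsymbol{x}^{\star}(\boldsymbol{H}))\), writes \(\nabla f = \nabla_{\boldsymbol{H}} h + \nabla_{\boldsymbol{x}} h \cdot \nabla_{\boldsymbol{H}}\boldsymbol{x}^{\star}\), and kills the cross term via the stationarity condition \(\nabla_{\boldsymbol{x}} h(\boldsymbol{H},\boldsymbol{x}^{\star}) = \boldsymbol{0}\) --- i.e., exactly the ``slicker alternative'' you relegate to a closing remark, except that the paper silently assumes \(\boldsymbol{x}^{\star}(\cdot)\) is differentiable, which the stated hypotheses (mere existence of a minimizer) do not supply. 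Your primary sandwich argument is the standard Danskin/envelope-theorem route and is stronger on precisely this point: it never differentiates the argmin map, requiring only its continuity together with continuity of \(\nabla_{\boldsymbol{H}} h\), and you correctly observe both that even this much is not implied by the lemma's hypotheses and that it holds for free in the paper's application, where the inner problem \eqref{eq:f-x} is a strongly convex quadratic whose minimizer \eqref{eq:xoptimal} is affine (hence Lipschitz) in \(\boldsymbol{H}\). This caveat is not pedantry: already in the real scalar case, \(h(t,x) = (1-x^2)^2 + tx\) is smooth with a minimizer for every \(t\), yet \(f(t) = -|t| + O(t^2)\) near \(t=0\) is nondifferentiable, so the lemma as literally stated can fail without uniqueness or continuity of the inner minimizer; your proof thus both establishes the result under weaker regularity than the paper's chain rule demands and pinpoints the hypothesis the lemma is actually missing. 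What the paper's approach buys in exchange is brevity, and --- once the implicit function theorem is invoked as you suggest, using invertibility of \(\nabla^2_{\boldsymbol{x}} h\) --- full rigor in the quadratic setting where the lemma is actually used.
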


\begin{proof}
To prove this lemma, we utilize the chain rule of calculus. Let \( \boldsymbol{x}^{\star} \) be the minimizer of \( h(\boldsymbol{H},\boldsymbol{x}) \), such that \( f(\boldsymbol{H})=h(\boldsymbol{H},\boldsymbol{x}^{\star}) \). We then have:
\begin{align*}
\nabla f(\boldsymbol{H}) &= \nabla h(\boldsymbol{H},\boldsymbol{x}^{\star}) \\
&= \nabla_{\boldsymbol{H}} h(\boldsymbol{H},\boldsymbol{x}^{\star}) + \nabla_{\boldsymbol{x}} h(\boldsymbol{H},\boldsymbol{x}^{\star}) \nabla_{\boldsymbol{H}} \boldsymbol{x}^{\star} \\
&= \nabla_{\boldsymbol{H}} h(\boldsymbol{H},\boldsymbol{x}^{\star}),
\end{align*}
where the second equality is derived from the chain rule, and the third equality follows from the optimality condition \( \nabla_{\boldsymbol{x}} h(\boldsymbol{H},\boldsymbol{x}^{\star})=\boldsymbol{0} \).

Hence, we have proven that \( \nabla f(\boldsymbol{H})=\nabla_{\boldsymbol{H}} h(\boldsymbol{H},\boldsymbol{x}^{\star}) \).
\end{proof}

\subsection{Proof of Prop. \ref{prop:Lipschitz-Constant}} \label{subsection: Lipschitzconstant}
We adapted a modification to the non-convex non-smooth unconstrained optimization problem by transforming it into a single variable problem \( F(\boldsymbol{H}) = f(\boldsymbol{H}) + g(\boldsymbol{H}) \). A key advantage of this approach is achieving a better upper bound for \( L_{\nabla f} \) as \( \beta \), which is independent of the signal size. The proof is as follows:

\begin{proof}
Firstly, consider the gradient with respect to \( \boldsymbol{x} \):
\begin{align}
    \label{eq: gradient_x}
   & \nabla_{\boldsymbol{x}} \big(  
    \frac{1}{2} \| \boldsymbol{s} - \mathcal{P}_{\Omega}\boldsymbol{x}\|^2  
     + \frac{\beta}{2} \| \boldsymbol{H} - \mathcal{H}\boldsymbol{x}\|^{2}_{F}+\frac{\alpha}{2}\|\boldsymbol{x}\|^2 
      \big) \nonumber \\
    & =  \underbrace{(\alpha\mathcal{I}+ \mathcal{P}_{\Omega}^*\mathcal{P}_{\Omega}+\beta\mathcal{H}^*\mathcal{H})^{-1}}_{\mathcal{L}}   \boldsymbol{x} - \mathcal{P}_{\Omega}^* \boldsymbol{s}
    - \beta\mathcal{H}^* \boldsymbol{H}.
\end{align}

Setting this gradient to zero yields the closed-form solution for \( \boldsymbol{x}^{\star} \):
\[ \boldsymbol{x}^{\star} = \mathcal{L} ( \mathcal{P}_{\Omega}^*\boldsymbol{s} + \beta\mathcal{H}^*\boldsymbol{H} ). \]

Substituting \( \boldsymbol{x}^{\star} \) into \( h(\boldsymbol{H},\boldsymbol{x}) \), we get:
\[ f(\boldsymbol{H}) = h(\boldsymbol{H},\boldsymbol{x}^{\star}) = \frac{1}{2} \| \boldsymbol{s} - \mathcal{P}_{\Omega}\boldsymbol{x}^{\star} \|^2 + \frac{\beta}{2} \| \boldsymbol{H} - \mathcal{H}\boldsymbol{x}^{\star} \|^{2}_{F} + \frac{\alpha}{2}\|\boldsymbol{x}\|^2. \]

Applying Lemma \ref{lemma: innergradient}, we have:
\begin{align}
    \nabla f(\boldsymbol{H}) & =\nabla_{\boldsymbol{H}}h(\boldsymbol{H},\boldsymbol{x}^{\star})   = \beta (\boldsymbol{H}-\mathcal{H}\boldsymbol{x}^{\star})   \nonumber \\ 
    & = \beta \boldsymbol{H} - \beta \mathcal{H}\mathcal{L} ( \mathcal{P}_{\Omega}^*\boldsymbol{s} + \beta\mathcal{H}^*\boldsymbol{H} )
\end{align}

The Hessian matrix of \( f(\boldsymbol{H}) \) is then \( \beta \mathcal{I} - \beta^2 \mathcal{H} \mathcal{L} \mathcal{H}^* \). Since:
\begin{align}\label{eq:Lipschitz-constant-operator-norm}
    0 \leq \| \beta^2 \mathcal{H} \mathcal{L}\mathcal{H}^* \| = \beta \| \beta \mathcal{H} (\alpha\mathcal{I} + \mathcal{P}_{\Omega}\mathcal{P}_{\Omega}^* + \beta\mathcal{H}^*\mathcal{H})^{-1} \mathcal{H}^* \| < \beta,
\end{align}
we conclude that \( L_{\nabla f} < \beta \).\footnote{The operator norm is used in \eqref{eq:Lipschitz-constant-operator-norm}. It is the maximum eigenvalue of the corresponding matrix representation.}
\end{proof}

\subsection{Proof of Prop. \ref{pro: gradient_h}} \label{subsection: proofofclosedform}
In this proof, we aim to efficiently solve the low-rank subspace projection step by transforming \(h(\boldsymbol{\Sigma},\boldsymbol{x})\) into a single variable function \(f(\boldsymbol{\Sigma})\). We will detail the procedure to find the optimal value for the low-rank subspace projection. 

\begin{proof}
Consider the optimization problem:
\begin{align*}
    L(\boldsymbol{\Sigma}) &= \min_{\boldsymbol{x}}~ h(\boldsymbol{\Sigma},\boldsymbol{x}) \\
    &= \min_{\boldsymbol{x}}~ \frac{1}{2} \| \boldsymbol{s} - \mathcal{P}_{\Omega}\boldsymbol{x} \|^2 
    + \frac{\beta}{2} \| \boldsymbol{U} \boldsymbol{\Sigma} \boldsymbol{V}^{\mathsf{H}} - \mathcal{H}\boldsymbol{x} \|^2_F + \frac{\alpha}{2}\|\boldsymbol{x}\|^2.
\end{align*}

The optimal \( \boldsymbol{x} \), denoted as \( \boldsymbol{x}^{\dagger} \), for any given \( \boldsymbol{\Sigma} \), is found by solving the inner least squares problem \( h(\boldsymbol{\Sigma},\boldsymbol{x}) \).

By differentiating with respect to \( \boldsymbol{x} \), we obtain:
\begin{align}\label{eq: partial_x}
    \frac{\partial}{\partial \boldsymbol{x}} h(\boldsymbol{\Sigma},\boldsymbol{x})
    &= (\alpha\mathcal{I}+\mathcal{P}_{\Omega}^*\mathcal{P}_{\Omega}+\beta\mathcal{H}^*\mathcal{H}) \boldsymbol{x} - \mathcal{P}_{\Omega}^* \boldsymbol{s} - \beta\mathcal{H}^* \mathcal{P}_{\boldsymbol{U},\boldsymbol{V}} \boldsymbol{\Sigma}.
\end{align} 

Setting \eqref{eq: partial_x} to zero yields the closed-form solution for \( \boldsymbol{x^\dagger} \):
\begin{equation}
\boldsymbol{x^\dagger}=\underbrace{(\alpha\mathcal{I}+ \mathcal{P}_{\Omega}^*\mathcal{P}_{\Omega}+\beta\mathcal{H}^*\mathcal{H})^{-1}}_{\mathcal{L}}  
   (\mathcal{P}_{\Omega}^*\boldsymbol{s} + \beta\mathcal{H}^*\boldsymbol{U}\boldsymbol{\Sigma}\boldsymbol{V}^{\mathsf{H}} ). \nonumber
\end{equation}

Applying Lemma \ref{lemma: innergradient}, the gradient of \( f \) with respect to \( \boldsymbol{\Sigma} \) is:
\begin{align}
    \nabla f(\boldsymbol{\Sigma}) &= \nabla_{\boldsymbol{\Sigma}} h(\boldsymbol{\Sigma}, \boldsymbol{x}^{\dagger}) \nonumber \\
    &= \beta \mathcal{P}^*_{\boldsymbol{U},\boldsymbol{V}} \mathcal{P}_{\boldsymbol{U},\boldsymbol{V}} \boldsymbol{\Sigma} - \beta \mathcal{P}^*_{\boldsymbol{U},\boldsymbol{V}} \mathcal{H}\boldsymbol{x^\dagger} \nonumber \\
    &= \beta \boldsymbol{\Sigma} - \beta \mathcal{P}^*_{\boldsymbol{U},\boldsymbol{V}} \mathcal{H} \mathcal{L} (\mathcal{P}_{\Omega}^*\boldsymbol{s} + \beta\mathcal{H}^* \mathcal{P}_{\boldsymbol{U},\boldsymbol{V}}\boldsymbol{\Sigma} ).
\end{align}

The solution to the linear equation 
\begin{equation} \label{eq:linear eq}
    [\mathcal{I}- \beta \mathcal{P}^*_{\boldsymbol{U},\boldsymbol{V}} \mathcal{H} \mathcal{L} \mathcal{H^*}\mathcal{P}_{\boldsymbol{U},\boldsymbol{V}}]\boldsymbol{\Sigma} = \mathcal{P}^*_{\boldsymbol{U},\boldsymbol{V}} \mathcal{H} \mathcal{L}\mathcal{P}^*_{\Omega}\boldsymbol{s}.
\end{equation}
provides the Proposition \ref{pro: gradient_h} results:
\begin{align}
    \boldsymbol{x}^{\star} &= (\alpha\mathcal{I}+ \mathcal{P}_{\Omega}^*\mathcal{P}_{\Omega}+\beta\mathcal{H}^*\mathcal{H})^{-1} (\mathcal{P}_{\Omega}^*\boldsymbol{s} + \beta\mathcal{H}^* \mathcal{P}_{\boldsymbol{U},\boldsymbol{V}} \boldsymbol{\Sigma}^\star ), \nonumber \\
    \boldsymbol{\Sigma}^{\star} &= \left( \mathcal{I}- \beta \mathcal{P}^*_{\boldsymbol{U},\boldsymbol{V}} \mathcal{H} \mathcal{L} \mathcal{H^*}\mathcal{P}_{\boldsymbol{U},\boldsymbol{V}} \right)^{-1} \left( \mathcal{P}^*_{\boldsymbol{U},\boldsymbol{V}} \mathcal{H} \mathcal{L}\mathcal{P}^*_{\Omega} \boldsymbol{s} \right).
\end{align}
\end{proof}

\subsection{Proof of Prop. \ref{Theo_SP_reduction}} \label{proof_prop_2}
The proposition asserts that the low-rank subspace projection step guarantees sufficient decrease in the objective function. The formal statement is as follows:
\begin{equation}
    F(\boldsymbol{H}_{k,\frac{1}{2}}) \leq F(\boldsymbol{H}_k) - \left(\frac{\alpha}{\alpha+\beta}\right) \| \boldsymbol{H}_{k,\frac{1}{2}} - \boldsymbol{H}_{k}\|_F^2.
\end{equation}
\begin{proof}
Define the mapping \( \mathcal{M} \) as follows:
\begin{equation}
    \mathcal{M} = \left( \mathcal{I} - \beta \mathcal{P}^*_{\boldsymbol{U},\boldsymbol{V}} \mathcal{H} \mathcal{L} \mathcal{H}^*\mathcal{P}_{\boldsymbol{U},\boldsymbol{V}} \right).
\end{equation}
Then, consider the following inequality chain:
\begin{align} \label{eq: suffcient_decrease_SP}
    \| \boldsymbol{H}_{k,\frac{1}{2}} - \boldsymbol{H}_{k}  \|_F^2 &= \|\boldsymbol{\Sigma}_{k,\frac{1}{2}}-\boldsymbol{\Sigma}_{k}\|_F^2 \nonumber \\ 
    &\overset{(a)}{\leq} \left(\frac{\alpha+\beta}{\alpha}\right) \| \boldsymbol{\Sigma}_{k,\frac{1}{2}}-\boldsymbol{\Sigma}_{k} \|_{\mathcal{M}}^2 \nonumber \\
    &\overset{(b)}{\leq} \left(\frac{\alpha+\beta}{\alpha}\right) \left( F(\boldsymbol{H}_k) - F(\boldsymbol{H}_{k,\frac{1}{2}}) \right).
\end{align}
Inequality (a) follows from the fact that \( \frac{\alpha}{\alpha+\beta} \) is the smallest eigenvalue of the positive definite mapping \( \mathcal{M} \). Inequality (b) arises from the optimal solution of the least squares problem in the low-rank subspace projection step.
\end{proof}

\subsection{Proof of Theorem \ref{theo: convergence}} \label{subsection：convergence}
Our non-convex proximal gradient methods, featuring two monotonously decreasing steps, converge to critical points. This convergence is primarily attributed to the modified PG step, which demonstrates that $\|\partial F (\boldsymbol{H}_{k+1}) \|_F \leq (1/\gamma + L_{\nabla f})\| \boldsymbol{H}_{k+1} - \boldsymbol{H}_{k,\frac{1}{2}}\|_F$, with the latter term approaching zero. We also provide a convergence speed analysis. The proof can be divided into three key steps:

\begin{enumerate}
    \item \textbf{Sufficient Decrease of the PG Step:} In Algorithm \ref{LPPG}, observe that:
    \begin{align}
        \boldsymbol{H}_{k+1} &= \arg \min_{\boldsymbol{H}} \langle \nabla f (\boldsymbol{H}_{k,\frac{1}{2}}), \boldsymbol{H} - \boldsymbol{H}_{k,\frac{1}{2}} \rangle \nonumber \\
        &+ \frac{1}{2\gamma} \| \boldsymbol{H} - \boldsymbol{H}_{k,\frac{1}{2}} \|_F^2 + g(\boldsymbol{H}). \label{eq: optimal_quardratic}
    \end{align}
    Therefore, we have:
        \begin{align}
           & \langle \nabla f (\boldsymbol{H}_{k,\frac{1}{2}}) , \boldsymbol{H}_{k+1}-\boldsymbol{H}_{k,\frac{1}{2}} \rangle \nonumber  \\ 
           &+ \frac{1}{2\gamma } \|  \boldsymbol{H}_{k+1}-\boldsymbol{H}_{k,\frac{1}{2}}  \|_F^2 + g(\boldsymbol{H}_{k+1})  \nonumber  \\
           &  \leq g(\boldsymbol{H}_{k,\frac{1}{2}}).
        \end{align}
        From the Lipschitz continuous of $\nabla f$ we have 
        \begin{align}
            & F(\boldsymbol{H}_{k+1})  \leq g(\boldsymbol{H}_{k+1}) + f(\boldsymbol{H}_{k,\frac{1}{2}}) \\ \nonumber
            & + \langle \nabla f (\boldsymbol{H}_{k,\frac{1}{2}}) , \boldsymbol{H}_{k+1}-\boldsymbol{H}_{k,\frac{1}{2}} \rangle   + \frac{L_f}{2 } \|  \boldsymbol{H}_{k+1}-\boldsymbol{H}_{k,\frac{1}{2}}  \|_F^2    \\ \nonumber
           & \leq  g(\boldsymbol{H}_{k,\frac{1}{2}})  - \langle \nabla f (\boldsymbol{H}_{k,\frac{1}{2}}) , \boldsymbol{H}_{k+1}-\boldsymbol{H}_{k,\frac{1}{2}} \rangle  \\ \nonumber 
           & - \frac{1}{2\gamma } \|  \boldsymbol{H}_{k+1}-\boldsymbol{H}_{k,\frac{1}{2}}  \|_F^2  + f(\boldsymbol{H}_{k,\frac{1}{2}}) \nonumber \\ 
           &  + \langle \nabla f (\boldsymbol{H}_{k,\frac{1}{2}}) , \boldsymbol{H}_{k+1}-\boldsymbol{H}_{k,\frac{1}{2}} \rangle - \frac{L_f}{2 } \|  \boldsymbol{H}_{k+1}-\boldsymbol{H}_{k,\frac{1}{2}}  \|_F^2 \nonumber \\
            & = F(\boldsymbol{H}_{k,\frac{1}{2}}) - \frac{\frac{1}{\gamma}-L_f}{2} \|  \boldsymbol{H}_{k+1}-\boldsymbol{H}_{k,\frac{1}{2}}  \|_F^2. \label{eq: sufficient_decrease}
        \end{align}

    \item \textbf{${H}^*$ is a critical point: $0 \in \partial F(\boldsymbol{H}^*)$}
    
    Due to the decrease property of the subspace projection step, we have:
    \begin{equation}
        F(\boldsymbol{H}_{k+1,\frac{1}{2}}) \leq F(\boldsymbol{H}_{k+1}) \leq F(\boldsymbol{H}_{k,\frac{1}{2}}).
    \end{equation}
    Thus:
    \begin{equation}
        F(\boldsymbol{H}_{k+1,\frac{1}{2}}) \leq F(\boldsymbol{H}_0), \quad F(\boldsymbol{H}_{k+1}) \leq F(\boldsymbol{H}_0)
    \end{equation}
    for all $k$. It is evident that $\{\boldsymbol{H}_k\}$ and $\{\boldsymbol{H}_{k,\frac{1}{2}}\}$ are bounded. Therefore, $\{\boldsymbol{H}_k\}$ has accumulation points. As $F(\boldsymbol{H}_k)$ is decreasing, $F$ has the same value at all the accumulation points. Let this value be $F^*$. Based on \eqref{eq: sufficient_decrease}, we have:
    \begin{align}
        \frac{\frac{1}{\gamma}-L_f}{2} \| \boldsymbol{H}_{k+1}-\boldsymbol{H}_{k,\frac{1}{2}} \|_F^2 & \leq F(\boldsymbol{H}_{k,\frac{1}{2}}) - F(\boldsymbol{H}_{k+1}) \nonumber \\
        & \leq  F(\boldsymbol{H}_{k}) - F(\boldsymbol{H}_{k+1}).
    \end{align}
    Summing over $k = 0,\dots,K$, we obtain:
      \begin{align}\label{eq: sum}
           & \frac{\frac{1}{\gamma}-L_f}{2} \sum_{k=0}^K \|  \boldsymbol{H}_{k+1}-\boldsymbol{H}_{k,\frac{1}{2}}  \|_F^2  \nonumber \\
           & \leq F(\boldsymbol{H}_{0}) - F(\boldsymbol{H}_{k})\leq F(\boldsymbol{H}_{0}) - F^*\leq \infty. 
        \end{align}
    As $L_f < \frac{1}{\gamma}$, we have:
    \begin{equation}
        \lim_{k \rightarrow \infty} \| \boldsymbol{H}_{k+1}-\boldsymbol{H}_{k,\frac{1}{2}} \|_F^2 \rightarrow 0. \label{eq: HH_converge}
    \end{equation}
    From the optimality condition of \eqref{eq: optimal_quardratic}, we have:
    \begin{align} \label{eq: partialF}
        0 \in \nabla f(\boldsymbol{H}_{k,\frac{1}{2}}) + \frac{1}{\gamma} (\boldsymbol{H}_{k+1}-\boldsymbol{H}_{k,\frac{1}{2}}) + \partial g(\boldsymbol{H}_{k+1}).
    \end{align}
    Hence:
    \begin{equation} 
        -\nabla f(\boldsymbol{H}_{k,\frac{1}{2}}) + \nabla f(\boldsymbol{H}_{k+1}) - \frac{1}{\gamma} (\boldsymbol{H}_{k+1}-\boldsymbol{H}_{k,\frac{1}{2}}) \in \partial F(\boldsymbol{H}_{k+1}).
    \end{equation}
    Moreover, we get the key result: 
   \begin{empheq}[box=\fbox]{align}
      &  \| -\nabla f(\boldsymbol{H}_{k,\frac{1}{2}}) + \nabla f(\boldsymbol{H}_{k+1})- \frac{1}{\gamma} (\boldsymbol{H}_{k+1}-\boldsymbol{H}_{k,\frac{1}{2}})\| \nonumber \\
           & \leq (\frac{1}{\gamma} + L_f) \| \boldsymbol{H}_{k+1}-\boldsymbol{H}_{k,\frac{1}{2}} \| \rightarrow 0 \quad \text{as} \quad k \rightarrow \infty.
   \end{empheq}
    Let $\boldsymbol{H}^*$ be any accumulation point of $\{\boldsymbol{H}_{k,\frac{1}{2}}\}$, say $\{ \boldsymbol{H}_{k_j,\frac{1}{2}}\} \rightarrow \boldsymbol{H}^*$ as $j \rightarrow \infty$. From \eqref{eq: HH_converge}, we have $\{ \boldsymbol{H}_{k_j +1 }\} \rightarrow \boldsymbol{H}^*$ as $j \rightarrow \infty$. Since $f$ is continuously differentiable and $g$ is lower semicontinuous, we have:
    \begin{equation} \label{eq: Flimit}
        \lim_{j \rightarrow \infty} F(\boldsymbol{H}_{k_j+1}) = F(\boldsymbol{H}^*).
    \end{equation}
    Therefore, combining $\{ \boldsymbol{H}_{k_j+1}\} \rightarrow \boldsymbol{H}^*$, \eqref{eq: partialFlimit}, \eqref{eq: boundgradient}, and \eqref{eq: Flimit}, we conclude:
    \begin{equation}
        0 \in \partial F(\boldsymbol{H}^*).
    \end{equation}

        \item \textbf{Convergence speed} 
      Given that \eqref{eq:SC_F} implies $\beta^2 \mathcal{H}\mathcal{L}\mathcal{H}^*(\boldsymbol{H}_{k,\frac{1}{2}} - \boldsymbol{H}_{k+1}) \in \partial F(\boldsymbol{H}_{k+1})$, we can establish the convergence speed as follows:
        \begin{align*}
        &  \min_{i=0,\cdots,K} ~  \|\partial F(\boldsymbol{H}_{i+1})  \|_F^2 \le \frac{\beta^4}{(\alpha+\beta)^2} \cdot \min_{i=0,\cdots,K} ~ \| 
        \boldsymbol{H}_{i,\frac{1}{2}}- \boldsymbol{H}_{i+1} \|^2 _F \\
        & \le \frac{\beta^4}{(\alpha+\beta)^2} \cdot \frac{1}{K+1}\sum_{k=0}^K \|  \boldsymbol{H}_{k,\frac{1}{2}}-\boldsymbol{H}_{k+1}  \|_F^2 \\
        & \leq  \frac{c_0}{K+1} ( F(\boldsymbol{H}_{0}) - F^*) ,
        \end{align*}
        where $c_0= \frac{(\beta-L_f)\beta^4}{(\alpha+\beta)^2}$.
        
         In conclusion, we finish the proof.

\end{enumerate}

 
\bibliographystyle{IEEEtran}
\bibliography{refs}

\begin{thebibliography}{10}
\providecommand{\url}[1]{#1}
\csname url@samestyle\endcsname
\providecommand{\newblock}{\relax}
\providecommand{\bibinfo}[2]{#2}
\providecommand{\BIBentrySTDinterwordspacing}{\spaceskip=0pt\relax}
\providecommand{\BIBentryALTinterwordstretchfactor}{4}
\providecommand{\BIBentryALTinterwordspacing}{\spaceskip=\fontdimen2\font plus
\BIBentryALTinterwordstretchfactor\fontdimen3\font minus
  \fontdimen4\font\relax}
\providecommand{\BIBforeignlanguage}[2]{{%
\expandafter\ifx\csname l@#1\endcsname\relax
\typeout{** WARNING: IEEEtran.bst: No hyphenation pattern has been}%
\typeout{** loaded for the language `#1'. Using the pattern for}%
\typeout{** the default language instead.}%
\else
\language=\csname l@#1\endcsname
\fi
#2}}
\providecommand{\BIBdecl}{\relax}
\BIBdecl

\bibitem{lustig2007sparse}
M.~Lustig, D.~Donoho, and J.~M. Pauly, ``Sparse {MRI}: The application of
  compressed sensing for rapid {MR} imaging,'' \emph{Magnetic Resonance in
  Medicine: An Official Journal of the International Society for Magnetic
  Resonance in Medicine}, vol.~58, no.~6, pp. 1182--1195, 2007.

\bibitem{schermelleh2010guide}
L.~Schermelleh, R.~Heintzmann, and H.~Leonhardt, ``A guide to super-resolution
  fluorescence microscopy,'' \emph{Journal of Cell Biology}, vol. 190, no.~2,
  pp. 165--175, 2010.

\bibitem{potter2010sparsity}
L.~C. Potter, E.~Ertin, J.~T. Parker, and M.~Cetin, ``Sparsity and compressed
  sensing in radar imaging,'' \emph{Proceedings of the IEEE}, vol.~98, no.~6,
  pp. 1006--1020, 2010.

\bibitem{chi2013compressive}
Y.~Chi, Y.~Xie, and R.~Calderbank, ``Compressive demodulation of mutually
  interfering signals,'' \emph{arXiv preprint arXiv:1303.3904}, 2013.

\bibitem{gillard2022hankel}
J.~Gillard and K.~Usevich, ``Hankel low-rank approximation and completion in
  time series analysis and forecasting: a brief review,'' \emph{arXiv preprint
  arXiv:2206.05103}, 2022.

\bibitem{schmidt1986multiple}
R.~Schmidt, ``Multiple emitter location and signal parameter estimation,''
  \emph{IEEE Trans. Antennas Propag.}, vol.~34, no.~3, pp. 276--280, 1986.

\bibitem{de1795essai}
G.~R. de~Prony, ``Essai experimental et analytique: {S}ur les lois de la
  dilatabilite des fluides elastique et sur celles de la force expansive de la
  vapeur de l'eau et de la vapeur de l'alkool, a differentes temperatures,''
  \emph{J. de l’Ecole Polytechnique}, 1795.

\bibitem{hua1992estimating}
Y.~Hua, ``Estimating two-dimensional frequencies by matrix enhancement and
  matrix pencil,'' \emph{IEEE Trans. Signal Process.}, vol.~40, no.~9, pp.
  2267--2280, 1992.

\bibitem{donoho2006compressed}
D.~L. Donoho, ``Compressed sensing,'' \emph{IEEE Transactions on information
  theory}, vol.~52, no.~4, pp. 1289--1306, 2006.

\bibitem{candes2006robust}
E.~J. Cand{\`e}s, J.~Romberg, and T.~Tao, ``Robust uncertainty principles:
  Exact signal reconstruction from highly incomplete frequency information,''
  \emph{IEEE Transactions on information theory}, vol.~52, no.~2, pp. 489--509,
  2006.

\bibitem{blumensath2009iterative}
T.~Blumensath and M.~E. Davies, ``Iterative hard thresholding for compressed
  sensing,'' \emph{Appl. Comput. Harmon. Anal.}, vol.~27, no.~3, pp. 265--274,
  2009.

\bibitem{dai2009subspace}
W.~Dai and O.~Milenkovic, ``Subspace pursuit for compressive sensing signal
  reconstruction,'' \emph{IEEE Trans. Inf. Theory}, vol.~55, no.~5, pp.
  2230--2249, 2009.

\bibitem{needell2009cosamp}
D.~Needell and J.~A. Tropp, ``{CoSaMP}: Iterative signal recovery from
  incomplete and inaccurate samples,'' \emph{Appl. Comput. Harmon. Anal.},
  vol.~26, no.~3, pp. 301--321, 2009.

\bibitem{chi2011sensitivity}
Y.~Chi, L.~L. Scharf, A.~Pezeshki, and A.~R. Calderbank, ``Sensitivity to basis
  mismatch in compressed sensing,'' \emph{IEEE Transactions on Signal
  Processing}, vol.~59, no.~5, pp. 2182--2195, 2011.

\bibitem{herman2010general}
M.~A. Herman and T.~Strohmer, ``General deviants: An analysis of perturbations
  in compressed sensing,'' \emph{IEEE Journal of Selected topics in signal
  processing}, vol.~4, no.~2, pp. 342--349, 2010.

\bibitem{candes2014towards}
E.~J. Cand{\`e}s and C.~Fernandez-Granda, ``Towards a mathematical theory of
  super-resolution,'' \emph{Communications on pure and applied Mathematics},
  vol.~67, no.~6, pp. 906--956, 2014.

\bibitem{tang2013compressed}
G.~Tang, B.~N. Bhaskar, P.~Shah, and B.~Recht, ``Compressed sensing off the
  grid,'' \emph{IEEE Trans. Inf. Theory}, vol.~59, no.~11, pp. 7465--7490,
  2013.

\bibitem{chen2013spectral}
Y.~Chen and Y.~Chi, ``Spectral compressed sensing via structured matrix
  completion,'' in \emph{International Conference on Machine Learning}.\hskip
  1em plus 0.5em minus 0.4em\relax PMLR, 2013, pp. 414--422.

\bibitem{gillis2011low}
N.~Gillis and F.~Glineur, ``Low-rank matrix approximation with weights or
  missing data is np-hard,'' \emph{SIAM Journal on Matrix Analysis and
  Applications}, vol.~32, no.~4, pp. 1149--1165, 2011.

\bibitem{ottaviani2014exact}
G.~Ottaviani, P.-J. Spaenlehauer, and B.~Sturmfels, ``Exact solutions in
  structured low-rank approximation,'' \emph{SIAM Journal on Matrix Analysis
  and Applications}, vol.~35, no.~4, pp. 1521--1542, 2014.

\bibitem{candes2012exact}
E.~Candes and B.~Recht, ``Exact matrix completion via convex optimization,''
  \emph{Communications of the ACM}, vol.~55, no.~6, pp. 111--119, 2012.

\bibitem{fazel2013hankel}
M.~Fazel, T.~K. Pong, D.~Sun, and P.~Tseng, ``Hankel matrix rank minimization
  with applications to system identification and realization,'' \emph{SIAM J.
  Matrix Anal. Appl}, vol.~34, no.~3, pp. 946--977, 2013.

\bibitem{xu2018sep}
W.~Xu, J.~Yi, S.~Dasgupta, J.-F. Cai, M.~Jacob, and M.~Cho, ``Seperation-free
  super-resolution from compressed measurements is possible: an orthonormal
  atomic norm minimization approach,'' in \emph{2018 IEEE International
  Symposium on Information Theory (ISIT)}.\hskip 1em plus 0.5em minus
  0.4em\relax IEEE, 2018, pp. 76--80.

\bibitem{burer2005local}
S.~Burer and R.~D. Monteiro, ``Local minima and convergence in low-rank
  semidefinite programming,'' \emph{Mathematical programming}, vol. 103, no.~3,
  pp. 427--444, 2005.

\bibitem{zhang2021spectrally}
X.~Zhang, Y.~Liu, and W.~Cui, ``Spectrally sparse signal recovery via {H}ankel
  matrix completion with prior information,'' \emph{IEEE Trans. Signal
  Process.}, vol.~69, pp. 2174--2187, 2021.

\bibitem{cadzow1988signal}
J.~A. Cadzow, ``Signal enhancement-a composite property mapping algorithm,''
  \emph{IEEE Transactions on Acoustics, Speech, and Signal Processing},
  vol.~36, no.~1, pp. 49--62, 1988.

\bibitem{ying2017hankel}
J.~Ying, H.~Lu, Q.~Wei, J.-F. Cai, D.~Guo, J.~Wu, Z.~Chen, and X.~Qu, ``Hankel
  matrix nuclear norm regularized tensor completion for $ n $-dimensional
  exponential signals,'' \emph{IEEE Trans. Signal Process.}, vol.~65, no.~14,
  pp. 3702--3717, 2017.

\bibitem{cai2018spectral}
J.-F. Cai, T.~Wang, and K.~Wei, ``Spectral compressed sensing via projected
  gradient descent,'' \emph{SIAM Journal on Optimization}, vol.~28, no.~3, pp.
  2625--2653, 2018.

\bibitem{cai2019fast}
------, ``Fast and provable algorithms for spectrally sparse signal
  reconstruction via low-rank hankel matrix completion,'' \emph{Appl. Comput.
  Harmon. Anal.}, vol.~46, no.~1, pp. 94--121, 2019.

\bibitem{wang2021fast}
H.~Wang, J.-F. Cai, T.~Wang, and K.~Wei, ``Fast cadzow’s algorithm and a
  gradient variant,'' \emph{Journal of Scientific Computing}, vol.~88, no.~2,
  pp. 1--21, 2021.

\bibitem{schost2016quadratically}
{\'E}.~Schost and P.-J. Spaenlehauer, ``A quadratically convergent algorithm
  for structured low-rank approximation,'' \emph{Foundations of Computational
  Mathematics}, vol.~16, no.~2, pp. 457--492, 2016.

\bibitem{lewis2008alternating}
A.~S. Lewis and J.~Malick, ``Alternating projections on manifolds,''
  \emph{Mathematics of Operations Research}, vol.~33, no.~1, pp. 216--234,
  2008.

\bibitem{andersson2014new}
F.~Andersson, M.~Carlsson, J.-Y. Tourneret, and H.~Wendt, ``A new frequency
  estimation method for equally and unequally spaced data,'' \emph{IEEE
  Transactions on Signal Processing}, vol.~62, no.~21, pp. 5761--5774, 2014.

\bibitem{wei2016guarantees}
K.~Wei, J.-F. Cai, T.~F. Chan, and S.~Leung, ``Guarantees of riemannian
  optimization for low rank matrix completion,'' \emph{arXiv preprint
  arXiv:1603.06610}, 2016.

\bibitem{parikh2014proximal}
N.~Parikh, S.~Boyd \emph{et~al.}, ``Proximal algorithms,'' \emph{Foundations
  and trends{\textregistered} in Optimization}, vol.~1, no.~3, pp. 127--239,
  2014.

\bibitem{davenport2016overview}
M.~A. Davenport and J.~Romberg, ``An overview of low-rank matrix recovery from
  incomplete observations,'' \emph{IEEE J. Sel. Topics Signal Process.},
  vol.~10, no.~4, pp. 608--622, 2016.

\bibitem{li2015accelerated}
H.~Li and Z.~Lin, ``Accelerated proximal gradient methods for nonconvex
  programming,'' \emph{Advances in neural information processing systems},
  vol.~28, 2015.

\bibitem{lee2014proximal}
J.~D. Lee, Y.~Sun, and M.~A. Saunders, ``Proximal newton-type methods for
  minimizing composite functions,'' \emph{SIAM Journal on Optimization},
  vol.~24, no.~3, pp. 1420--1443, 2014.

\bibitem{yao2023projected}
X.~Yao and W.~Dai, ``A projected proximal gradient method for efficient
  recovery of spectrally sparse signals,'' in \emph{2023 31st European Signal
  Processing Conference (EUSIPCO)}.\hskip 1em plus 0.5em minus 0.4em\relax
  IEEE, 2023, pp. 1738--1742.

\bibitem{lee2012proximal}
J.~D. Lee, Y.~Sun, and M.~Saunders, ``Proximal {Newton-type} methods for convex
  optimization,'' \emph{Advances in Neural Information Processing Systems},
  vol.~25, 2012.

\bibitem{zvonarev2015iterative}
N.~Zvonarev and N.~Golyandina, ``Iterative algorithms for weighted and
  unweighted finite-rank time-series approximations,'' \emph{arXiv preprint
  arXiv:1507.02751}, 2015.

\bibitem{gillard2015stochastic}
J.~W. Gillard and A.~A. Zhigljavsky, ``Stochastic algorithms for solving
  structured low-rank matrix approximation problems,'' \emph{Communications in
  Nonlinear Science and Numerical Simulation}, vol.~21, no. 1-3, pp. 70--88,
  2015.

\bibitem{antonello2018proximal}
N.~Antonello, L.~Stella, P.~Patrinos, and T.~van Waterschoot, ``Proximal
  gradient algorithms: Applications in signal processing,'' \emph{arXiv
  preprint arXiv:1803.01621}, 2018.

\bibitem{lu2015fast}
L.~Lu, W.~Xu, and S.~Qiao, ``{A fast SVD for multilevel block Hankel matrices
  with minimal memory storage},'' \emph{Numerical Algorithms}, vol.~69, no.~4,
  pp. 875--891, 2015.

\bibitem{gillard2010cadzow}
J.~Gillard, ``Cadzow’s basic algorithm, alternating projections and singular
  spectrum analysis,'' \emph{Statistics and its Interface}, vol.~3, no.~3, pp.
  335--343, 2010.

\bibitem{golub2013matrix}
G.~H. Golub and C.~F. Van~Loan, \emph{Matrix computations}.\hskip 1em plus
  0.5em minus 0.4em\relax JHU press, 2013.

\bibitem{hestenes1952methods}
M.~R. Hestenes, E.~Stiefel \emph{et~al.}, ``Methods of conjugate gradients for
  solving linear systems,'' \emph{Journal of research of the National Bureau of
  Standards}, vol.~49, no.~6, pp. 409--436, 1952.

\bibitem{lanczos1950iteration}
C.~Lanczos, ``An iteration method for the solution of the eigenvalue problem of
  linear differential and integral operators,'' \emph{Journal of research of
  the National Bureau of Standards}, 1950.

\bibitem{baglama2005augmented}
J.~Baglama and L.~Reichel, ``Augmented implicitly restarted lanczos
  bidiagonalization methods,'' \emph{SIAM Journal on Scientific Computing},
  vol.~27, no.~1, pp. 19--42, 2005.

\end{thebibliography}


\clearpage

\end{document}